\documentclass[11pt,a4paper]{article}

\usepackage[a4paper, portrait, margin=2.5cm]{geometry}

\usepackage{amsmath,amssymb,amsthm}
\usepackage{graphicx}
\graphicspath{{./}}
\usepackage{booktabs}
\usepackage[colorlinks=true, citecolor=blue, linkcolor=blue, urlcolor=blue]{hyperref}
\usepackage{authblk}
\usepackage{mathpazo}
\usepackage{microtype}
\usepackage{float}

\newtheorem{theorem}{Theorem}
\newtheorem{lemma}[theorem]{Lemma}
\newtheorem{corollary}[theorem]{Corollary}
\newtheorem{proposition}[theorem]{Proposition}
\theoremstyle{remark}
\newtheorem{remark}{Remark}

\begin{document}


\newcommand{\ParamMu}{3.5}                         
\newcommand{\ParamB}{1500}                         
\newcommand{\ParamMwLow}{5}                        
\newcommand{\ParamMwMid}{20}                       
\newcommand{\ParamMwHigh}{35}                      
\newcommand{\ParamCzero}{0.041}                    
\newcommand{\ParamP}{0.77}                         
\newcommand{\ParamQ}{1.3}                          
\newcommand{\ParamRzero}{1.5}                      

\newcommand{\AlphaStarLow}{2.938}                  
\newcommand{\AlphaStarMid}{2.905}                  
\newcommand{\AlphaStarHigh}{2.897}                 
\newcommand{\RstarLow}{1.200}                      
\newcommand{\RstarMid}{1.019}                      
\newcommand{\RstarHigh}{0.937}                     
\newcommand{\AlphaMurrayLimit}{3.000000}           

\newcommand{\BoundLower}{2.885}                    
\newcommand{\BoundUpper}{3.000}                    

\newcommand{\ClassImpedance}{2.000}                
\newcommand{\ClassDaVinci}{2.500}                  
\newcommand{\ClassWall}{2.885}                     
\newcommand{\ClassMurray}{3.000}                   

\newcommand{\AngleMurrayFull}{74.9}                
\newcommand{\AngleWallFull}{80.2}                  

\newcommand{\DeltaAlphaMurray}{0.30}               
\newcommand{\DeltaAlphaOurs}{0.20}                 
\newcommand{\AlphaScaleRange}{0.012}               
\newcommand{\AlphaPulmonaryLimit}{2.800}           

\title{\LARGE\bfseries Beyond Murray's Law: \\[4pt] 
Non-Universal Branching Exponents from Vessel-Wall Metabolic Costs}

\author{Riccardo Marchesi}
\affil{University of Pavia}
\date{\today}

\maketitle

\begin{abstract}
Murray's cubic branching law ($\alpha=3$) predicts a universal diameter scaling
exponent for all hierarchical transport networks, yet arterial trees
consistently yield $\alpha \sim 2.7$--$2.9$. We show that this discrepancy
has a structural origin: Murray's universality is an artifact of his cost
function's homogeneity, not a property of biological networks. Incorporating
the empirical vessel-wall thickness law $h(r)=c_0 r^p$ ($p \approx 0.77$
across mammalian species) introduces a third metabolic cost term
$\propto r^{1+p}$ that renders the cost function inhomogeneous with
incommensurate scaling exponents. By Cauchy's functional equation, homogeneity
is both necessary and sufficient for a universal branching exponent to exist;
its absence rigorously implies non-universality, and Murray's cubic law is
thereby identified as a singular degeneracy of the cost-function family rather
than a general biological principle. We prove that the resulting
scale-dependent exponent satisfies the strict bounds $(5+p)/2 < \alpha^*(Q) < 3$
independently of flow asymmetry (Theorem~\ref{thm:bounds},
Corollary~\ref{cor:asym}), and that Murray's law is the unique member of this
cost-function family admitting a universal exponent
(Corollary~\ref{cor:murray_uniqueness}). The static wall-tissue mechanism
rigorously bounds the symmetric bifurcation exponent to
$\alpha_t \in [2.90, 2.94]$ from independently measured parameters, representing
a first-order symmetry breaking from Murray's law that narrows the empirical
gap by one-third. The remaining discrepancy with the cardiovascular mean
($\alpha_{\mathrm{exp}} \approx 2.70$) is not a model failure but a
mathematical necessity that signals the independent contribution of pulsatile
wave dynamics, necessitating a unified variational treatment. Additionally,
the wall cost strictly breaks Murray's topological degeneracy, bounding the
optimal branching number to small finite integers and excluding star-like
topologies; binary bifurcation emerges as the physiologically selected minimum
under steric constraints (Theorem~\ref{thm:Nstar}).
\end{abstract}

\newpage

\section{Introduction}
\label{sec:intro}

The branching geometry of hierarchical transport networks---vascular trees,
plant xylem, river drainage basins, engineered pipe manifolds---is commonly
characterized by the exponent $\alpha$ in the diameter scaling law
\begin{equation}
  d_0^\alpha = \sum_i d_i^\alpha\,,
  \label{eq:murray}
\end{equation}
where $d_0$ is the parent diameter and $d_i$ are the daughter diameters
(since $d=2r$, the exponent is the same in the radius convention used below).
Murray~\cite{Murray1926} derived $\alpha=3$ by minimizing the sum of viscous
dissipation and blood-volume metabolic cost over a single vessel.
This result is elegant and well-supported for pure-flow networks (pulmonary
airways, river networks, microfluidic channels), but empirical measurements
of arterial trees consistently give $\alpha\approx2.7$--$2.9$~\cite{Kassab1993,
Zamir1999}, a systematic gap that has resisted a parameter-free explanation.

The departure of empirical exponents from Murray's ideal $\alpha=3$ is
traditionally attributed to pulsatile wave dynamics. In the cardiovascular
system, optimal wave propagation requires impedance matching at
bifurcations~\cite{Nichols2011}, which favors $\alpha=2$ (for acoustic-like
waves) or $\alpha=5/2$ (for purely elastic walls). While unifying wave
reflection costs and viscous dissipation into a single network-level
Lagrangian provides a compelling physical picture for intermediate exponents,
it typically requires complex assumptions about the operational duty cycle of
pulsatile versus steady flow. In this Article, we demonstrate that one does
not need to invoke pulsatile wave dynamics to explain $\alpha < 3$. Even in
the purely static regime, simply completing Murray's original metabolic cost
function to include the structural tissue of the vessel wall is sufficient
to rigorously break the universality of the cubic law.

Recent work by Bennett~\cite{bennett2025} develops a variational
framework for branching networks based on two-term cost functions: starting
from a scale-free homogeneity condition on a $Q^2 r^{-p} + b\,r^m$ ledger,
Bennett derives generalized Murray scaling $\alpha=(m{+}p)/2$, a
Young--Herring-type junction angle balance, concave Gilbert-type flux costs,
and a single-index (rigidity index $\chi$) unification showing that all three
are controlled by one dimensionless ratio~\cite{bennett2025}. For Poiseuille
flow with volume-priced maintenance ($m=2$, $p=4$), this recovers $\alpha=3$;
for surface-priced maintenance ($m=1$), $\alpha=5/2$. The two-term limit of
our Theorem~\ref{thm:single} ($\alpha = (4{+}\gamma)/2$ with the maintenance
exponent $\gamma$ physically grounded) coincides with the generalized Murray
scaling independently established by Bennett through abstract scale-free
homogeneity, providing convergent validation from two distinct theoretical
starting points. However, the structural exponent $m$ in Bennett's framework
is a phenomenological parameter whose value is not predicted from independently
measurable tissue quantities.

We keep Murray's energy-minimization objective and ask what happens when
a third biological term is added, structurally breaking the Euler homogeneity
(inhomogeneity with incommensurate scaling exponents) that underlies Bennett's
two-term class. The key observation is that vessel walls---smooth muscle
cells, extracellular matrix, active tension---have a non-negligible metabolic
cost that scales differently from blood volume. Histological measurements
across species establish $h(r)=c_0 r^p$ with $p\approx0.77$~\cite{Rhodin1967,
Nichols2011,WolinskyGlagov1967}, introducing a third cost term $\propto r^{1+p}$
with exponent strictly between 1 and 2.

We emphasize that this static mechanism represents a first-order symmetry
breaking from Murray's law. While it narrows the gap to empirical data by
one-third, the remaining discrepancy with the cardiovascular mean
($\alpha \approx 2.70$) is not a failure of the model but a mathematical
necessity: the static wall-tissue mechanism and pulsatile wave dynamics are
complementary, non-overlapping contributions to the full architectural optimum.

The resulting three-term cost function leads to several rigorous results, the
most important of which are: Murray's law is the \emph{unique} cost function
of this family that produces a universal branching exponent
(Corollary~\ref{cor:murray_uniqueness}), the three-term case predicts a
scale-dependent exponent $\alpha^*(Q)$ bounded strictly below 3
(Theorem~\ref{thm:bounds}), and the wall cost rigorously breaks Murray's
structural degeneracy to uniquely select $N=2$ (bifurcation) as the optimal
branching topology (Theorem~\ref{thm:Nstar}).

\section{Setup}
\label{sec:setup}

The metabolic cost per unit length of a vessel of radius $r$ carrying
volumetric flow $Q$ has three physically distinct components. The wall
tissue term is derived by integrating the volumetric metabolic rate
$m_w$ over the wall cross-section: for a cylindrical vessel with
thickness $h(r) = c_0 r^p$, the thin-wall approximation ($h \ll r$,
satisfied for $h/r \approx 0.18$ at $r = 1.5\,\text{mm}$) gives a
wall cross-sectional area $2\pi r\,h(r)$, yielding a cost per unit
length $\Phi_{\mathrm{wall}} = 2\pi m_w c_0 r^{1+p}$. The full cost
is therefore:
\begin{equation}
  \Phi(r,Q) \;=\; \underbrace{\frac{8\mu Q^2}{\pi r^4}}_{\text{viscous}}
            \;+\; \underbrace{b\pi r^2}_{\text{blood volume}}
            \;+\; \underbrace{2\pi m_w c_0\, r^{1+p}}_{\text{wall tissue}}\,,
  \label{eq:phi}
\end{equation}
where $\mu$ is dynamic viscosity, $b$ is the metabolic cost per unit
blood volume, $m_w$ is the metabolic rate of wall tissue, and $h(r)=c_0 r^p$
is wall thickness. This derivation assumes active smooth-muscle metabolism
dominates over passive structural cost, consistent with $m_w \in [5,
35]\,\text{kW\,m}^{-3}$~\cite{Paul1980}.
In compact form,
\begin{equation}
  \Phi(r,Q) = A(Q)\,r^{-4} + B\,r^2 + C\,r^{1+p}\,,
  \label{eq:phi_compact}
\end{equation}
with $A(Q)=8\mu Q^2/\pi\propto Q^2$, $B=b\pi$, and $C=2\pi m_w c_0$,
all positive.

\paragraph{Empirical inputs (stated, not derived).}
The viscous dissipation term assumes fully developed Poiseuille (laminar) flow;
for coronary arteries, $\mathrm{Re}\sim200$--$500$, well within the laminar
regime.
The wall-thickness law $h(r)=c_0 r^p$ is taken from published histological
measurements: $c_0=0.041$, $p=0.77$~\cite{Rhodin1967,Nichols2011,
WolinskyGlagov1967}.
Here $c_0$ carries units $\mathrm{m}^{1-p}$ so that $h(r)$ is expressed in
metres when $r$ is in metres; the numerical value $c_0=0.041$ corresponds to
$h\approx274\,\mu\mathrm{m}$ at $r=1.5\,\mathrm{mm}$, consistent with
histological data.
All other parameters are drawn from independent sources:
$\mu=3.5$\,mPa\,s~\cite{Caro1978},
$b=1500$\,W/$\mathrm{m}^3$~\cite{Murray1926,Taber1998},
$m_w\in[5,35]$\,kW/$\mathrm{m}^3$~\cite{Paul1980}.
No parameter is fitted to morphometric data. Crucially, as demonstrated via
sensitivity analysis in the unified framework, the topological optimum
$\alpha^*$ is structurally insensitive ($|S_{c_0}| < 0.01$) to the exact
absolute value of the histological pre-factor $c_0$. The exponent prediction
depends critically only on the scaling exponent $p$, ensuring the result is not
an artifact of numerical fitting.

\textit{Womersley pulsatility.} Proximal pulsatile flow at $\mathrm{Wo} > 1$
modifies the viscous dissipation term. Since $\mathrm{Wo} =
r\sqrt{\omega\rho/\mu}$
depends explicitly on $r$, the corrected dissipation reads
$F(\mathrm{Wo}(r))\cdot A(Q)\cdot r^{-4}$, and the optimality condition
$\partial\Phi/\partial r=0$
acquires an additional term via the product rule:
\[
\frac{\partial\Phi}{\partial r} = -A(Q)\,\tilde{F}(\mathrm{Wo}(r))\,r^{-5} + 2Br + (1+p)Cr^p = 0,
\]
where $\tilde{F}(\mathrm{Wo}) \equiv 4F(\mathrm{Wo}) -
\mathrm{Wo}\,F'(\mathrm{Wo}) > 0$ for all physiological $\mathrm{Wo}$.
Mathematically, this positivity condition equates to the logarithmic derivative
requirement $d(\ln F)/d(\ln \mathrm{Wo}) < 4$. Because the Womersley multiplier
asymptotically approaches $F \propto \mathrm{Wo}$ for purely inertia-dominated
high-frequency flows~\cite{Nichols2011}, its logarithmic derivative is strictly
bounded by $1 \ll 4$ across all physiological regimes. The Womersley correction
therefore modifies the \textit{coefficient} of the dominant $r^{-5}$
term but leaves its algebraic power unchanged. Since the proof of
Theorem~\ref{thm:bounds} relies exclusively on the sign structure and the
relative powers $\{-5, 1, p\}$
of the three terms in $\partial\Phi/\partial r$---not on the precise magnitude
of the dissipation coefficient---the strict bounds $(5+p)/2 < \alpha^*(Q) < 3$
are preserved under arbitrary smooth $F(\mathrm{Wo}(r))$. This is structurally
identical to the argument for Perturbation 3 (Fahr\ae us--Lindqvist): in both
cases, an $r$-dependent modification of the dissipation coefficient leaves the
maintenance-term incommensurability---the true source of
non-universality---intact. While the logarithmic derivative analytically
guarantees the preservation of these bounds, numerical evaluation across the
physiological Womersley range ($\mathrm{Wo} \in [1, 10]$) confirms that the
actual dynamic shift of the lower bound is strictly of order
$\mathcal{O}(10^{-2})$. Therefore, the static prediction $\alpha^* \approx 2.90$
remains quantitatively robust even in the proximal pulsatile regime.

Crucially, this perturbation strictly captures only the altered velocity
profile of the viscous drag (the active dissipated power). It explicitly
does not account for the reactive power---fluid inertia and vessel-wall
compliance---associated with pulsatile pressure-wave propagation. These
reactive components represent a physically distinct, non-dissipative
penalty that is invisible to a purely static metabolic ledger: they enter
as dimensionless wave-reflection losses at bifurcations, a network-level
observable incommensurate with the extensive costs optimised here. Their
incorporation therefore requires a unified network-level Lagrangian in
which dissipative and non-dissipative penalties are treated on equal
footing---a structural extension beyond the scope of the present static
framework.

\section{Mathematical Results}
\label{sec:results}

\begin{theorem}[Existence and uniqueness of the optimal radius]
\label{thm:unique}
For every $Q>0$ and every $A,B,C,p>0$, the function $r\mapsto\Phi(r,Q)$
has a unique global minimum $r^*(Q)$ on $(0,+\infty)$.
\end{theorem}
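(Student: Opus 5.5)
The plan is to work directly with the derivative and show it has exactly one sign change on $(0,\infty)$. Fix $Q>0$ and write $\Phi(r)=A r^{-4}+Br^2+Cr^{1+p}$ with $A,B,C,p>0$. First I will record the boundary behaviour, which guarantees that a global minimum exists:
\begin{equation*}
\lim_{r\to0^+}\Phi(r)=+\infty \quad(\text{since } Ar^{-4}\to\infty \text{ and the other terms stay bounded}),\qquad
\lim_{r\to\infty}\Phi(r)=+\infty \quad(\text{since } Br^2\to\infty).
\end{equation*}
Since $\Phi$ is continuous, pick any $r_1>0$. There is a compact interval $[a,b]\subset(0,\infty)$ outside of which $\Phi>\Phi(r_1)$. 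On $[a,b]$ the infimum is attained, so a global minimizer exists.

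For uniqueness I will multiply the derivative by $r^5>0$:
\begin{equation*}
r^5\,\partial_r\Phi(r)= -4A + 2B r^6 + (1+p)C r^{5+p} \;=:\; g(r).
\end{equation*}
The function $g$ is strictly increasing on $(0,\infty)$, because $g'(r)=12Br^5+(1+p)(5+p)Cr^{4+p}>0$. It also satisfies $g(0^+)=-4A<0$ and $g(r)\to+\infty$ as $r\to\infty$. By the intermediate value theorem and strict monotonicity, $g$ has exactly one zero $r^*$. Hence $\partial_r\Phi<0$ on $(0,r^*)$ and $\partial_r\Phi>0$ on $(r^*,\infty)$. So $\Phi$ is strictly decreasing and then strictly increasing, and $r^*$ is the unique global minimizer. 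This argument makes the existence step redundant, though it can be kept as a consistency check.

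I do not expect a real obstacle here. The only point needing care is to avoid arguing through convexity of $\Phi$. Convexity does in fact hold for $p>0$, since each term is convex because $1+p>1$. Even so, the monotone-$g$ argument is cleaner, and it works for every $p>-5$. It also uses only the sign structure of the powers $\{-5,1,p\}$, matching the robustness remark made for the Womersley correction.
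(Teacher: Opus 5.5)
Your proof is correct and shares the paper's skeleton: locate a zero of $\partial_r\Phi$ by the intermediate value theorem, then show it is the only one. The uniqueness step differs. The paper shows $\partial_r^2\Phi=20Ar^{-6}+2B+p(1+p)Cr^{p-1}>0$, i.e.\ strict convexity of $\Phi$, while you show that $g=r^5\,\partial_r\Phi$ is strictly increasing.

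Your version has a concrete payoff. The equation $g=0$ is exactly the optimality condition $4A=g_0(r)$, where $g_0(r)=2Br^6+(1+p)Cr^{5+p}$ is a smooth bijection of $(0,\infty)$ onto itself with $g_0'>0$. Hence $r^*(Q)=g_0^{-1}(4A(Q))$ is at once smooth and strictly increasing in $Q$. The paper's proof establishes this extra property separately via the implicit function theorem. It is not part of the statement, but Lemma~\ref{lem:cauchy} and Corollary~\ref{cor:asym} depend on it, and the latter uses precisely the monotonicity of $g_0$. It is worth adding that one line.

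Two of your closing remarks overreach, though neither affects the proof for $p>0$.
\begin{itemize}
\item The range $p>-5$ is too generous. For $-5<p<-1$ one has $(1+p)(5+p)<0$, so $g'<0$ near $r=0$ and $g$ is not monotone. Uniqueness survives there only because $p(1+p)>0$ makes $\Phi$ convex, i.e.\ by the paper's mechanism. Conversely, for $-1<p<0$ your argument works while convexity can fail. So each mechanism covers a range of $p$ the other misses, rather than yours extending to all $p>-5$.
\item The monotonicity of $g$ relies on the coefficient of $r^{-5}$ being independent of $r$. It therefore does not carry over verbatim to the Womersley-corrected derivative $-A\tilde F(\mathrm{Wo}(r))\,r^{-5}+\cdots$.
\end{itemize}
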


\begin{proof}
The derivative $f(r)=\partial\Phi/\partial r = -4Ar^{-5}+2Br+(1+p)Cr^p$
satisfies $f(r)\to-\infty$ as $r\to0^+$ and $f(r)\to+\infty$ as $r\to+\infty$,
so by the Intermediate Value Theorem at least one zero exists.
The second derivative (acting as the 1D scalar Hessian determinant)
\[
  \frac{\partial^2\Phi}{\partial r^2} = 20A r^{-6} + 2B + p(1+p)Cr^{p-1} > 0
  \quad\text{for all }r>0\,,
\]
so $\Phi$ is strictly convex, giving a unique global minimum.
Moreover, $r^*(Q)$ is smooth and strictly increasing: by the implicit function
theorem applied to $\partial\Phi/\partial r=0$,
\[
  \frac{dr^*}{dQ} = -\frac{\partial^2\Phi/\partial r\,\partial Q}{\partial^2\Phi/\partial r^2} > 0\,,
\]
since $\partial^2\Phi/\partial r\,\partial Q = -64\mu Q/(\pi r^5) < 0$ and
$\partial^2\Phi/\partial r^2 > 0$.
\end{proof}

\begin{lemma}[Power law $\Leftrightarrow$ Murray's law on all trees]
\label{lem:cauchy}
Let $r^*(Q)$ be the optimal radius from Theorem~\ref{thm:unique}.
The branching law $r^*(Q_0)^\alpha = r^*(Q_1)^\alpha + r^*(Q_2)^\alpha$ holds
for \emph{all} flow-conserving bifurcations $Q_0=Q_1+Q_2$ if and only if
$r^*(Q)=kQ^{1/\alpha}$ for some constants $k,\alpha>0$.
\end{lemma}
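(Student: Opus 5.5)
The plan is to reduce the branching law to Cauchy's additive functional equation on $(0,\infty)$. The reverse direction is immediate. If $r^*(Q)=kQ^{1/\alpha}$, then $r^*(Q)^\alpha=k^\alpha Q$, which is additive in $Q$. Hence flow conservation $Q_0=Q_1+Q_2$ gives $r^*(Q_0)^\alpha=r^*(Q_1)^\alpha+r^*(Q_2)^\alpha$ for every bifurcation.

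For the forward direction, suppose the law holds with some fixed $\alpha>0$ for all $Q_1,Q_2>0$. Define $g(Q)=r^*(Q)^\alpha$ on $(0,\infty)$. The hypothesis says exactly that
\[
g(Q_1+Q_2)=g(Q_1)+g(Q_2)\quad\text{for all }Q_1,Q_2>0 .
\]
This is Cauchy's equation restricted to the positive half-line. Theorem~\ref{thm:unique} supplies the regularity that excludes pathological solutions: $r^*$ is smooth, positive and strictly increasing in $Q$. Therefore $g$ is continuous, and in fact monotone and positive.

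I then run the standard argument directly on $(0,\infty)$, so that no extension to $\mathbb{R}$ is needed.

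\begin{itemize}
\item Induction gives $g(nQ)=n\,g(Q)$ for every integer $n\ge1$.
\item Consequently $g(Q/m)=g(Q)/m$, and so $g(qQ)=q\,g(Q)$ for every positive rational $q$.
\item Set $Q=1$ and $\kappa=g(1)>0$. This gives $g(q)=\kappa q$ on $\mathbb{Q}_{>0}$.
\item Continuity (or monotonicity alone, by squeezing between rationals) extends this to $g(Q)=\kappa Q$ for all $Q>0$.
\end{itemize}

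Inverting, $r^*(Q)=\kappa^{1/\alpha}Q^{1/\alpha}$, so the claim holds with $k=\kappa^{1/\alpha}>0$.

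The only step needing care is the regularity input that excludes non-measurable Cauchy solutions. This is precisely why the argument leans on Theorem~\ref{thm:unique}: its implicit-function argument gives smoothness and strict monotonicity of $r^*(Q)$, and either property suffices.

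A secondary point is the interpretation of the quantifiers. The statement should be read with $\alpha$ fixed, independent of $Q_0,Q_1,Q_2$. That is the meaning of a universal exponent, and it is what makes $g$ a single function of $Q$.

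I would also remark that the proof never uses the specific form of $\Phi$ beyond the existence, positivity and continuity of $r^*$. This generality is what allows the lemma to be applied later, for instance to Corollary~\ref{cor:murray_uniqueness}, to decide which cost functions produce power-law optima.
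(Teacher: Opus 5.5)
Your proposal is correct and is essentially the paper's proof. Like the paper, you set $g=(r^*)^\alpha$, recognize Cauchy's equation on $(0,\infty)$, and use the continuity (or monotonicity) of $r^*$ from Theorem~\ref{thm:unique} to conclude $g(Q)=cQ$; the only difference is that you spell out the rational-density argument the paper simply cites. As a small aside, the appeal to Theorem~\ref{thm:unique} is more than you need: $g>0$ and additivity already give $g(Q+h)=g(Q)+g(h)>g(Q)$, so $g$ is automatically monotone.
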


\begin{proof}
($\Leftarrow$) If $r^*(Q)=kQ^{1/\alpha}$, then
$r^*(Q_0)^\alpha = k^\alpha Q_0 = k^\alpha(Q_1+Q_2)
= r^*(Q_1)^\alpha+r^*(Q_2)^\alpha$.

($\Rightarrow$) Define $g(Q)=r^*(Q)^\alpha$.
The branching condition becomes $g(Q_1+Q_2)=g(Q_1)+g(Q_2)$ for all $Q_1,Q_2>0$.
This is Cauchy's functional equation on $(0,+\infty)$.
Since $r^*$ is continuous and strictly increasing in $Q$ (as the minimizer
of a family of strictly convex functions varying continuously in $Q$),
$g$ is continuous.
The unique continuous solution is $g(Q)=cQ$, giving
$r^*(Q)=c^{1/\alpha}Q^{1/\alpha}$.
\end{proof}

\begin{remark}
Lemma~\ref{lem:cauchy} connects local single-bifurcation optimization to the
global tree law~\eqref{eq:murray}: a universal branching exponent exists for
a cost function if and only if its optimal radius is a power law in flow.
\end{remark}

\begin{theorem}[Single-term classification]
\label{thm:single}
For the two-term cost $\Phi_\gamma(r,Q)=A(Q)r^{-4}+Br^\gamma$ with $A\propto
Q^2$
and $B,\gamma>0$, the optimal radius is
$r^*(Q)=K_\gamma Q^{2/(4+\gamma)}$ and Murray's branching law holds with
\begin{equation}
  \boxed{\alpha = \frac{4+\gamma}{2}\,.}
  \label{eq:alpha_gamma}
\end{equation}
\end{theorem}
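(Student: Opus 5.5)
The plan is to mirror the argument of Theorem~\ref{thm:unique} for the two-term cost and then invoke Lemma~\ref{lem:cauchy}. First, I will establish existence and uniqueness of the minimiser. The derivative is
\[
  \partial_r\Phi_\gamma = -4A(Q)r^{-5} + \gamma B r^{\gamma-1}.
\]
It tends to $-\infty$ as $r\to0^+$ and is positive for large $r$. Moreover, $\partial_r\Phi_\gamma>0$ is equivalent to $r^{4+\gamma}>4A/(\gamma B)$, so the derivative has exactly one sign change. This gives a unique global minimiser without needing convexity. This matters because for $\gamma<1$ the term $Br^\gamma$ is concave, so the convexity argument of Theorem~\ref{thm:unique} does not transfer verbatim. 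That is the only point needing care, and the sign-change argument sidesteps it.

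Second, I will solve the stationarity condition explicitly:
\[
  r^*(Q)^{4+\gamma} = \frac{4A(Q)}{\gamma B}.
\]
Writing $A(Q)=aQ^2$ (with $a=8\mu/\pi$ in the Poiseuille case) gives
\[
  r^*(Q) = \left(\frac{4a}{\gamma B}\right)^{1/(4+\gamma)} Q^{2/(4+\gamma)},
\]
which identifies $K_\gamma=(4a/(\gamma B))^{1/(4+\gamma)}$. Alternatively, one can note the scaling covariance $\Phi_\gamma(\lambda r,\lambda^{(4+\gamma)/2}Q)=\lambda^\gamma\Phi_\gamma(r,Q)$. Combined with uniqueness, this forces $r^*(\lambda^{(4+\gamma)/2}Q)=\lambda r^*(Q)$, so $r^*$ is a power law with exponent $2/(4+\gamma)$. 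I will mention this as the structural reason, since it is the homogeneity that later fails for the three-term cost.

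Finally, since $r^*(Q)=kQ^{1/\alpha}$ with $1/\alpha=2/(4+\gamma)$, the ($\Leftarrow$) direction of Lemma~\ref{lem:cauchy} gives $r^*(Q_0)^\alpha=r^*(Q_1)^\alpha+r^*(Q_2)^\alpha$ for every flow-conserving bifurcation, with $\alpha=(4+\gamma)/2$. As a sanity check, $\gamma=2$ recovers Murray's $\alpha=3$ and $\gamma=1$ gives $5/2$. There is no substantial obstacle; the step most likely to be mishandled is the uniqueness claim for $0<\gamma<1$, which the monotonicity of $r^{4+\gamma}$ settles.
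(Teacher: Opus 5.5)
Your proposal is correct and follows the paper's proof: solve $4A(Q)r^{-5}=\gamma B r^{\gamma-1}$ to get $r^*(Q)^{4+\gamma}=4A(Q)/(\gamma B)\propto Q^2$, then apply the ($\Leftarrow$) direction of Lemma~\ref{lem:cauchy} to read off $\alpha=(4+\gamma)/2$. Your sign-change argument is a worthwhile addition that the paper omits: it shows the stationary point is the unique global minimiser even for $0<\gamma<1$, where $\Phi_\gamma$ is not globally convex and the convexity argument of Theorem~\ref{thm:unique} does not carry over.
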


\begin{proof}
Setting $\partial\Phi_\gamma/\partial r=0$:
$4A(Q)r^{-5}=B\gamma r^{\gamma-1}$, so
$r^{4+\gamma}=[4/(B\gamma)]\cdot A(Q)\propto Q^2$,
giving $r^*(Q)\propto Q^{2/(4+\gamma)}$.
This is a power law, so by Lemma~\ref{lem:cauchy} Murray's law holds with
$\alpha=1/(2/(4+\gamma))=(4+\gamma)/2$.
\end{proof}

Table~\ref{tab:classification} lists the principal special cases.
Theorem~\ref{thm:single} identifies the Murray--Bennett family as precisely the
homogeneous subclass of the broader cost family~\eqref{eq:phi_compact}.
Homogeneity---the maintenance term $Br^\gamma$ scaling as a single power of
$r$---is both necessary and sufficient for a universal branching exponent, as
established by Lemma~\ref{lem:cauchy}. The Murray law ($\gamma=2$), the Da Vinci
surface law ($\gamma=1$), and Bennett's EPIC result~\cite{bennett2025} are
therefore
not independent discoveries but distinct instances of a single degeneracy class:
homogeneous cost functions admitting scale-invariant branching.

The three-term cost~\eqref{eq:phi_compact} with $B,C > 0$ necessarily exits this
class because $r^2$ and $r^{1+p}$ carry incommensurate scaling exponents
($2 \neq 1+p$ for $p\neq1$), making the composite cost inhomogeneous.
Non-universality of $\alpha$ is therefore a direct structural
consequence of biological completeness---the inclusion of both volumetric and
wall-tissue maintenance---not a numerical accident. The
formula~\eqref{eq:alpha_gamma}
coincides with Bennett's $\alpha=(m+4)/2$ under $\gamma\equiv m$; the
distinction
is that $\gamma=1+p$ is independently measurable from histology, yielding a
parameter-free prediction of $m$ for any network with a wall-like maintenance
cost,
without morphometric fitting.

\begin{table}[H]
\centering
\caption{Single-term classification via Theorem~\ref{thm:single}.}
\label{tab:classification}
\small
\begin{tabular}{@{}lcl@{}}
\toprule
Cost $\sim r^\gamma$ & $\alpha=(4+\gamma)/2$ & name \\
\midrule
$\gamma=0$ & $\ClassImpedance$ & impedance matching \\
$\gamma=1$ & $\ClassDaVinci$ & Da Vinci / surface \\
$\gamma=1+p=1.77$ & $\ClassWall$ & wall cost (limit) \\
$\gamma=2$ & $\ClassMurray$ & Murray (volume) \\
\bottomrule
\end{tabular}
\end{table}

\begin{theorem}[Strict bounds for the three-term case]
\label{thm:bounds}
For the cost function~\eqref{eq:phi_compact} with $B,C>0$ and $p\in(0,1)$,
define for each $Q>0$ the local branching exponent
\[
  \alpha^*(Q) \;=\; \frac{\ln 2}{\ln(r^*(Q)/r^*(Q/2))}\,,
\]
the exponent at a symmetric bifurcation with inlet flow $Q$.
The symmetric split $f=1/2$ is chosen as the canonical definition;
Corollary~\ref{cor:asym} establishes that the bounds are independent of $f$.
Then
\begin{equation}
  \frac{5+p}{2} \;<\; \alpha^*(Q) \;<\; 3 \qquad \text{for all }Q>0\,.
  \label{eq:bounds}
\end{equation}
\end{theorem}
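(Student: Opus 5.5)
The plan is to avoid solving for $r^*(Q)$ explicitly. Instead I will control the logarithmic elasticity $\varepsilon(Q)=d\ln r^*/d\ln Q$ pointwise and then integrate it over the octave $[Q/2,Q]$.

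First I rewrite the stationarity condition $f(r)=0$ from Theorem~\ref{thm:unique}. Multiplying $-4Ar^{-5}+2Br+(1+p)Cr^p=0$ by $r^5$ gives
\[
  4A(Q) \;=\; 2B\,r^6 + (1+p)C\,r^{5+p}\,,\qquad r=r^*(Q)\,.
\]
Theorem~\ref{thm:unique} already gives that $r^*$ is smooth and strictly increasing, so I may differentiate this identity logarithmically in $Q$. Since $A\propto Q^2$, the left side has logarithmic derivative $2$. On the right, set
\[
  w(r)=\frac{2Br^6}{2Br^6+(1+p)Cr^{5+p}}\in(0,1),
\]
which lies strictly inside $(0,1)$ because $B,C>0$. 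This yields
\[
  2 \;=\; \bigl[6\,w + (5+p)(1-w)\bigr]\,\frac{d\ln r^*}{d\ln Q}\,.
\]
So $1/\varepsilon(Q)$ equals one half of a strict convex combination of $6$ and $5+p$. Because $p\in(0,1)$ we have $5+p<6$, and therefore
\[
  \frac{5+p}{2} \;<\; \frac{1}{\varepsilon(Q)} \;<\; 3, \qquad\text{i.e.}\qquad \frac13<\varepsilon(Q)<\frac{2}{5+p},
\]
for every $Q>0$.

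Next I pass from the local elasticity to the finite-ratio exponent. Writing $s=\ln Q'$, we have
\[
  \ln\frac{r^*(Q)}{r^*(Q/2)}=\int_{\ln(Q/2)}^{\ln Q}\varepsilon\,ds .
\]
The integrand is continuous and strictly between $1/3$ and $2/(5+p)$ on an interval of length $\ln2$. Hence the integral lies strictly between $\tfrac13\ln 2$ and $\tfrac{2}{5+p}\ln 2$. Dividing $\ln 2$ by this quantity gives exactly~\eqref{eq:bounds}.

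The only delicate point is strictness. It follows because both weights $w$ and $1-w$ are strictly positive for all $r>0$ (this uses $B,C>0$), and because $p<1$ makes the two endpoints $6$ and $5+p$ distinct. I would also remark that the argument uses only the powers $\{-5,1,p\}$ and the signs of the terms. Consequently, the same computation applied to any flow ratio $f$ in place of $1/2$ immediately gives Corollary~\ref{cor:asym}.
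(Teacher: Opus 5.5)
Your proof is correct, but it takes a genuinely different route from the paper: you work infinitesimally, while the paper compares two flows directly.

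\textbf{The paper's route.} It never differentiates $r^*$. It writes the optimality condition at $Q$ and at $Q/2$, uses $A(Q/2)=A(Q)/4$ to divide one by the other, and obtains a single scalar equation $h(\rho)=4\rho^6+4\lambda\rho^{5+p}-1-\lambda=0$ for $\rho=r^*(Q/2)/r^*(Q)$, with $\lambda=(1+p)Cr_0^{p-1}/(2B)$. Since $h$ is increasing, the signs $h(2^{-1/3})>0$ and $h(2^{-2/(5+p)})<0$ trap the root between the Murray ratio and the wall-only ratio. This is purely algebraic and needs only existence and uniqueness of $r^*$ at two flows.

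\textbf{What your route buys.} You need $C^1$ regularity of $r^*$, which the implicit-function step in Theorem~\ref{thm:unique} supplies. In exchange you prove a stronger, scale-local statement: $1/\varepsilon=\tfrac{5+p}{2}+\tfrac{1-p}{2}\,w$ with $w=1/(1+\lambda(r))$. So the infinitesimal exponent is a weighted average of $(5+p)/2$ and $3$, weighted by the volume share of the marginal maintenance cost. Several consequences follow.
\begin{itemize}
\item Integrating over $[fQ,Q]$ gives $f^{2/(5+p)}<r^*(fQ)/r^*(Q)<f^{1/3}$ for every $f\in(0,1)$. These are exactly the two inequalities the paper proves separately with test radii in Corollary~\ref{cor:asym}. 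You still need the short monotonicity argument for $G(\alpha)=x^\alpha+y^\alpha$ to finish that corollary, so ``immediately'' is slightly generous.
\item Since $w$ increases with $r$ and $r^*$ increases with $Q$, your $\varepsilon$ is strictly decreasing in $Q$. This gives the asymptotics of Corollary~\ref{cor:asymp} and the strict monotonicity of $\alpha^*(Q)$ (the scale-gradient prediction in Section~\ref{sec:predictions}) with no extra work.
\end{itemize}

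\textbf{One caveat on your closing remark.} Your computation also uses that the coefficient of $r^{-5}$ is independent of $r$ and exactly proportional to $Q^2$. It does not rely only on the powers $\{-5,1,p\}$ and the signs. With an $r$-dependent dissipation factor $\tilde{F}$, as in the Womersley or non-Newtonian perturbations, an extra term $-\,d\ln\tilde{F}/d\ln r$ enters the denominator of $\varepsilon$. The convex-combination structure, and with it the bounds, is then no longer automatic.
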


\begin{proof}
Fix $Q>0$.
Let $r_0=r^*(Q)$, $r_1=r^*(Q/2)$, and $\rho=r_1/r_0\in(0,1)$.
Dividing the optimality conditions for $r_0$ and $r_1$ and writing $r_1=\rho
r_0$:
\begin{equation}
  h(\rho) \;\equiv\; 4\rho^6 + 4\lambda\rho^{5+p} - 1 - \lambda = 0\,,
  \label{eq:hrho}
\end{equation}
where $\lambda=(1+p)Cr_0^{p-1}/(2B)>0$.

\emph{Monotonicity.}
$h'(\rho)=24\rho^5+4\lambda(5+p)\rho^{4+p}>0$ for $\rho>0$,
and $h(0)<0<h(1)$, so~\eqref{eq:hrho} has a unique root $\rho^*\in(0,1)$.

\emph{Upper bound.}
Let $\rho_M=2^{-1/3}$ (the Murray value).
Evaluating:
\[
  h(\rho_M)=\lambda\!\left[4\cdot2^{-(5+p)/3}-1\right]>0\,,
\]
since $(5+p)/3<2$ for $p<1$, so $4\cdot2^{-(5+p)/3}>1$.
Thus $\rho^*<\rho_M$, giving $r_0/r_1>2^{1/3}$ and
$\alpha^*(Q)=\ln 2/\ln(1/\rho^*)<3$.

\emph{Lower bound.}
Let $\rho_W=2^{-2/(5+p)}$ (the wall-only value).
Evaluating:
\[
  h(\rho_W)=4\cdot2^{-12/(5+p)}-1<0\,,
\]
since $12/(5+p)>2$ for $p<1$.
Thus $\rho^*>\rho_W$ and $\alpha^*(Q)>(5+p)/2$.

Both bounds are strict because $B,C>0$ exclude the degenerate limits.
\end{proof}

\begin{corollary}[Independence from flow asymmetry]
\label{cor:asym}
For an asymmetric bifurcation where a parent vessel of flow $Q$ splits into two
daughters carrying fractions $fQ$ and $(1-f)Q$ with $f \in (0,1)$, the local
branching exponent $\alpha^*(Q,f)$ defined by $r^*(Q)^{\alpha} =
r^*(fQ)^{\alpha} + r^*((1-f)Q)^{\alpha}$ obeys the same strict boundaries:
$(5+p)/2 < \alpha^*(Q,f) < 3$.
\end{corollary}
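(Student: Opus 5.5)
The plan is to reduce the asymmetric statement to a monotonicity property of the flow elasticity of the optimal radius,
\[
\varepsilon(Q)=\frac{d\ln r^*}{d\ln Q},
\]
which is well defined because Theorem~\ref{thm:unique} gives $r^*$ smooth and strictly increasing. For fixed $Q$ and $f$, set $x=r^*(fQ)/r^*(Q)$ and $y=r^*((1-f)Q)/r^*(Q)$. Both lie in $(0,1)$ because $r^*$ is strictly increasing. The defining equation for $\alpha^*(Q,f)$ then reads $\varphi(\alpha)=1$, where $\varphi(\alpha)=x^\alpha+y^\alpha$. The function $\varphi$ is continuous and strictly decreasing on $(0,\infty)$, with $\varphi(0)=2$ and $\varphi(\alpha)\to0$ as $\alpha\to\infty$. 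Hence $\alpha^*(Q,f)$ exists and is unique, and the claim $(5+p)/2<\alpha^*<3$ is equivalent to the two inequalities
\[
\varphi(3)<1<\varphi\bigl((5+p)/2\bigr).
\]

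\textbf{Step 1: bound the elasticity.} Rewrite the optimality condition $\partial\Phi/\partial r=0$ as $4A(Q)r^{-6}=2B+(1+p)Cr^{p-1}$ and take logarithms. Differentiating with respect to $\ln Q$, using $A\propto Q^2$, gives
\[
2-6\varepsilon=(p-1)\,w\,\varepsilon, \qquad w=\frac{(1+p)Cr^{p-1}}{2B+(1+p)Cr^{p-1}}.
\]
Since $B,C>0$, the weight satisfies $w\in(0,1)$ strictly. Therefore
\[
\varepsilon(Q)=\frac{2}{6+(1-p)w}\in\Bigl(\tfrac{1}{3},\,\tfrac{2}{5+p}\Bigr)\quad\text{for every }Q>0,
\]
with strict inequalities coming from $0<w<1$ and $p<1$. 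This is the infinitesimal counterpart of the evaluations of $h(\rho_M)$ and $h(\rho_W)$ in the proof of Theorem~\ref{thm:bounds}.

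\textbf{Step 2: super- and subadditivity.} For an exponent $\beta>0$, put $g_\beta(Q)=r^*(Q)^\beta$. The logarithmic derivative of $g_\beta(Q)/Q$ equals $\beta\varepsilon(Q)-1$.
\begin{itemize}
\item For $\beta=3$, Step 1 gives $\beta\varepsilon-1>0$, so $g_3(Q)/Q$ is strictly increasing. Hence $g_3(fQ)=fQ\cdot\frac{g_3(fQ)}{fQ}<f\,g_3(Q)$, and likewise $g_3((1-f)Q)<(1-f)g_3(Q)$. Adding these yields $g_3(fQ)+g_3((1-f)Q)<g_3(Q)$, which is exactly $\varphi(3)<1$.
\item For $\beta=(5+p)/2$, Step 1 gives $\beta\varepsilon<1$, so $g_\beta(Q)/Q$ is strictly decreasing. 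The same argument with the inequalities reversed gives $\varphi\bigl((5+p)/2\bigr)>1$.
\end{itemize}
Strict monotonicity of $\varphi$ then places its unique root strictly between $(5+p)/2$ and $3$, for every $f\in(0,1)$ and every $Q>0$. Taking $f=1/2$ recovers Theorem~\ref{thm:bounds}.

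\textbf{Main obstacle.} The delicate point is Step 1. One must verify that implicit differentiation of the logarithmic optimality condition is legitimate, which holds because Theorem~\ref{thm:unique} gives $\partial^2\Phi/\partial r^2>0$, so the implicit function theorem applies. One must also check that the weight $w$ stays strictly inside $(0,1)$ uniformly in $Q$, which is what makes both bounds strict. Once the elasticity bound holds pointwise, the passage from local to finite, asymmetric ratios is simply the standard fact that $g(Q)/Q$ monotone implies sub- or superadditivity. No explicit control of $x$ and $y$ as functions of $f$ is needed, which is why the bounds are independent of $f$.
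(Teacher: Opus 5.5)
Your proof is correct apart from a sign slip noted below. It reaches the same intermediate inequalities as the paper, namely $r^*(fQ)^3<f\,r^*(Q)^3$ and $r^*(fQ)^{(5+p)/2}>f\,r^*(Q)^{(5+p)/2}$ for every $f\in(0,1)$, summed over the two daughters. The difference is how these are obtained.

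The paper uses a finite test-point comparison. Since $A(fQ)=f^2A(Q)$, it evaluates the right-hand side $2Br^6+(1+p)Cr^{5+p}$ of the optimality condition at the trial radii $f^{1/3}r^*(Q)$ and $f^{2/(5+p)}r^*(Q)$. Because $f^{(5+p)/3}>f^2>f^{12/(5+p)}$, this expression lies respectively above and below $4A(fQ)$, and monotonicity of that expression in $r$ gives the two inequalities. You instead bound the flow elasticity $\varepsilon=d\ln r^*/d\ln Q$ pointwise and integrate it, phrased as monotonicity of $(r^*)^\beta/Q$.

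The paper's route is more elementary. It needs no differentiability of $r^*$ and no implicit function theorem, only monotonicity in $r$ and the homogeneity $A\propto Q^2$. Yours is more structural. It shows $\varepsilon$ as an explicit interpolation between the Murray value $1/3$ and the wall value $2/(5+p)$, governed by the weight $w$. Since $w$ decreases in $Q$ from $1$ to $0$, the same formula also gives the scale gradient of Section~\ref{sec:predictions} and the limits of Corollary~\ref{cor:asymp} essentially for free.

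Two corrections to Step~1:
\begin{enumerate}
\item Solving $2-6\varepsilon=(p-1)w\varepsilon$ gives $\varepsilon=2/\bigl(6-(1-p)w\bigr)$, not $2/\bigl(6+(1-p)w\bigr)$. Your expression would put $\varepsilon$ in $\bigl(2/(7-p),1/3\bigr)$ and reverse both conclusions of Step~2. The interval $\bigl(1/3,2/(5+p)\bigr)$ that you state and actually use is the correct one, so only the displayed formula needs fixing.
\item The weight $w$ does not stay uniformly inside $(0,1)$: it tends to $0$ as $Q\to+\infty$ and to $1$ as $Q\to0^+$. You do not need uniformity, though. The pointwise bound $0<w<1$ makes the derivative of $\ln\bigl((r^*)^\beta/Q\bigr)$ strictly signed at every $Q$, which is all that strict monotonicity requires.
\end{enumerate}
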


\begin{proof}
Since $r^*$ is strictly increasing in $Q$ (Theorem~\ref{thm:unique}), define
\[
  x \;=\; \frac{r^*(fQ)}{r^*(Q)}\in(0,1), \qquad
  y \;=\; \frac{r^*((1-f)Q)}{r^*(Q)}\in(0,1).
\]
The branching equation $r^*(Q)^\alpha = r^*(fQ)^\alpha + r^*((1-f)Q)^\alpha$ is
equivalent to
\[
  G(\alpha) \;\equiv\; x^\alpha + y^\alpha \;=\; 1.
\]
Since $x,y\in(0,1)$, we have $G'(\alpha) = x^\alpha\ln x + y^\alpha\ln y < 0$
(both logarithms are strictly negative), so $G$ is \emph{strictly decreasing}.
Moreover $G(0)=2>1$ and $G(\alpha)\to0<1$ as $\alpha\to+\infty$.
By the Intermediate Value Theorem, $G(\alpha^*)=1$ has a unique solution
$\alpha^*(Q,f)>0$.

In the Murray limit ($C\to0$), $r^*\propto Q^{1/3}$, so $x=f^{1/3}$,
$y=(1-f)^{1/3}$, and $G(3)=f+(1-f)=1$; hence $\alpha^*=3$.
In the wall-dominated limit ($B\to0$), $r^*\propto Q^{2/(5+p)}$, so
$x=f^{2/(5+p)}$, $y=(1-f)^{2/(5+p)}$, and $G((5+p)/2)=1$; hence
$\alpha^*=(5+p)/2$.

For the intermediate regime $B,C>0$, $r^*(Q)$ is not a power law
(Corollary~\ref{cor:murray_uniqueness}); we evaluate the bounds explicitly.

\begin{description}
\item[\normalfont\emph{Upper bound} $G(3)<1$:\;]
Evaluate $4A = 2Br^6+(1+p)Cr^{5+p}$ at the test radius $f^{1/3}r^*(Q)$:
\[
  2B f^2 r^*(Q)^6 \;+\; (1+p)C\, f^{(5+p)/3} r^*(Q)^{5+p}.
\]
Since $p<1$, $(5+p)/3 < 2$, so $f^{(5+p)/3} > f^2$ for $f\in(0,1)$.
The expression strictly exceeds $f^2[2Br^*(Q)^6+(1+p)Cr^*(Q)^{5+p}]=4A(fQ)$.
Because the right-hand side of the optimality condition is strictly increasing
in $r$, this forces $r^*(fQ) < f^{1/3}r^*(Q)$, i.e.\ $x<f^{1/3}$, and
identically $y<(1-f)^{1/3}$, giving
\[
  G(3) \;=\; x^3 + y^3 \;<\; f + (1-f) \;=\; 1.
\]

\item[\normalfont\emph{Lower bound} $G\!\left(\tfrac{5+p}{2}\right)>1$:\;]
Evaluate at the test radius $f^{2/(5+p)}r^*(Q)$:
\[
  2B\, f^{12/(5+p)} r^*(Q)^6 \;+\; (1+p)C\, f^2 r^*(Q)^{5+p}.
\]
Since $p<1$, $12/(5+p)>2$, so $f^{12/(5+p)}<f^2$.
The expression is strictly less than $4A(fQ)$, forcing
$r^*(fQ) > f^{2/(5+p)}r^*(Q)$.
Setting $\beta=(5+p)/2$, this gives $x^\beta > f$ and $y^\beta > 1-f$, so
\[
  G\!\left(\tfrac{5+p}{2}\right) \;=\; x^\beta + y^\beta \;>\; 1.
\]
\end{description}

Since $G$ is strictly decreasing, the two bounds
$G(3)<1$ and $G\!\left(\tfrac{5+p}{2}\right)>1$ rigorously imply
$(5+p)/2 < \alpha^*(Q,f) < 3$ for all $f\in(0,1)$.
Non-universality is therefore an inherent property of the composite cost
function, not a geometric artifact of the symmetric split assumption.
\end{proof}

\begin{corollary}[Uniqueness of Murray scaling]
\label{cor:murray_uniqueness}
Among all cost functions of the form~\eqref{eq:phi_compact} with $B,C\geq0$ (not
both zero), Murray's cubic branching law ($\alpha=3$) is the unique member
admitting a universal, scale-independent exponent. Any deviation introduced by
the biological wall cost ($p<1$) structurally breaks the Euler homogeneity of
the cost function (rendering it inhomogeneous with incommensurate scaling),
rendering the branching exponent dependent on the absolute flow scale $Q$. This
represents a structural impossibility result \emph{within the
additive cost-function class~\eqref{eq:phi_compact}}: a universal
exponent cannot exist in a biologically complete transport network
whose wall cost scales sub-linearly ($p < 1$).
\end{corollary}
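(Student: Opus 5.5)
The plan is to split into three cases according to which of $B,C$ vanish, and to use Lemma~\ref{lem:cauchy} to turn the question into one about the optimal radius. That lemma says a universal exponent exists if and only if $r^*(Q)=kQ^{1/\alpha}$ for some constants $k,\alpha>0$. So it suffices to decide, in each case, whether $r^*$ is a pure power law in $Q$. Throughout, Theorem~\ref{thm:unique} guarantees that $r^*$ is well defined, continuous and strictly increasing, so the lemma applies.

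\emph{Two-term cases.} If $C=0$ and $B>0$, Theorem~\ref{thm:single} with $\gamma=2$ gives $r^*\propto Q^{1/3}$, hence $\alpha=3$ is universal. If $B=0$ and $C>0$, the same theorem with $\gamma=1+p$ gives a power law with $\alpha=(5+p)/2$. This value differs from $3$ when $p<1$. So the uniqueness statement, as the corollary intends it, must be read with the wall cost present and the family restricted to biologically complete costs. I will state explicitly that the claim is: for $B,C>0$ and $p\in(0,1)$ no universal exponent exists, and among the degenerate members the only one with volumetric cost present is Murray's.

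\emph{Three-term case, the core step.} Take $B,C>0$ and suppose for contradiction that $r^*(Q)=kQ^{1/\alpha}$. Then $\alpha^*(Q)$ in Theorem~\ref{thm:bounds} would equal this constant $\alpha$ for every $Q$. I will show instead that $\alpha^*(Q)$ has different limits at the two ends of the flow scale.
\begin{itemize}
\item From the optimality condition $4A(Q)=2Br^6+(1+p)Cr^{5+p}$, with $A(Q)\propto Q^2$ and $r^*$ increasing, we get $r^*\to 0$ as $Q\to0$ and $r^*\to\infty$ as $Q\to\infty$.
\item The parameter in \eqref{eq:hrho} is $\lambda=(1+p)Cr_0^{p-1}/(2B)$. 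Since $p<1$, $\lambda\to\infty$ as $Q\to0$ and $\lambda\to0$ as $Q\to\infty$.
\item By \eqref{eq:hrho}, $\rho^*$ depends continuously on $\lambda$. As $\lambda\to0$ the equation reduces to $4\rho^6=1$, so $\rho^*\to 2^{-1/3}$. As $\lambda\to\infty$, dividing by $\lambda$ gives $4\rho^{5+p}=1$, so $\rho^*\to 2^{-2/(5+p)}$.
\item Hence $\alpha^*(Q)\to 3$ as $Q\to\infty$ and $\alpha^*(Q)\to(5+p)/2$ as $Q\to0$.
\end{itemize}
These two limits differ, so $\alpha^*$ is not constant, contradicting the power-law assumption. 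Alternatively, once $r^*=kQ^{1/\alpha}$, the strict bounds of Theorem~\ref{thm:bounds} combined with these limits already force a contradiction.

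\emph{Main obstacle.} The hardest step is making the two limits of $\rho^*(\lambda)$ rigorous. I would handle it with the implicit function theorem on \eqref{eq:hrho}, using $h'(\rho)>0$ on $(0,1)$. For the limit $\lambda\to\infty$ I would rewrite the equation as $4\rho^6/\lambda+4\rho^{5+p}-1-1/\lambda=0$ and pass to the limit in $\mu=1/\lambda\to0$, where the derivative in $\rho$ stays positive. A simpler fallback avoids limits altogether: a power law $r^*=kQ^{1/\alpha}$ makes $\lambda\propto Q^{(p-1)/\alpha}$ strictly monotone in $Q$, while $\rho^*$ is constant. Differentiating \eqref{eq:hrho} in $\lambda$ at fixed $\rho$ gives $4\rho^{5+p}-1$. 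This must vanish, because $\rho^*$ does not move as $\lambda$ varies. Then $\rho^*=2^{-2/(5+p)}$, and substituting back into \eqref{eq:hrho} gives $4\rho^6=1$. Both cannot hold when $p\neq1$, which is the required contradiction.
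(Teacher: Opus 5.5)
Your argument is correct for the restricted claim ($B,C>0$, $p\neq1$), but it proves the key non-power-law step by a different route than the paper.

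\textbf{The paper's route.} The paper appeals to homogeneity. It asserts that, because $r^2$ and $r^{1+p}$ scale differently, the optimality condition $(32\mu/\pi)Q^2=2Br^6+(1+p)Cr^{5+p}$ admits no solution $r^*\propto Q^{1/\alpha}$, and then invokes Lemma~\ref{lem:cauchy}. Written out, that step substitutes $Q\propto r^{\alpha}$ and compares $r^{2\alpha}$ with $2Br^6+(1+p)Cr^{5+p}$ on $(0,\infty)$. Linear independence of distinct monomials then forces $B=0$ or $C=0$ unless $p=1$; the paper leaves this step implicit.

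\textbf{Your route.} You work with the symmetric ratio equation~\eqref{eq:hrho}. A universal exponent freezes $\rho^*=2^{-1/\alpha}$, while $\lambda\propto r_0^{p-1}$ varies. Since $h$ is affine in $\lambda$, this forces $4(\rho^*)^6=1$ and $4(\rho^*)^{5+p}=1$ simultaneously, which is impossible for $p\neq1$.

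\textbf{Comparison.} Your fallback should be your main argument. It needs only two distinct values of $\lambda$, which strict monotonicity of $r^*$ supplies. It does not even use the $(\Rightarrow)$ direction of Lemma~\ref{lem:cauchy}, because restricting a universal $\alpha$ to symmetric splits already makes $\rho^*$ constant. Your limit argument (essentially Corollary~\ref{cor:asymp}) is also valid and says more: $\alpha^*(Q)$ actually sweeps from $(5+p)/2$ to $3$. However, the implicit-function bookkeeping is unnecessary for an impossibility result. The paper's monomial comparison, once the linear-independence step is written out, is equally short and avoids the symmetric-bifurcation machinery of Theorem~\ref{thm:bounds} altogether.

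\textbf{The $B=0$ case.} Your case split catches a genuine defect in the statement. For $B=0$, $C>0$, Theorem~\ref{thm:single} gives the universal exponent $(5+p)/2\neq3$, so uniqueness over $B,C\ge0$ is false as written. The paper's closing claim, ``no universal $\alpha$ exists for $C>0$ and $p<1$'', silently assumes $B>0$. Your restriction to $B,C>0$ is therefore necessary, not cosmetic.
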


\begin{proof}
By Euler's homogeneous function theorem, a universal exponent $\alpha$ exists if
and only if the maintenance cost $\Phi_{maint}(r) = Br^2 + Cr^{1+p}$ is a
homogeneous function of $r$. This requires $r^2$ and $r^{1+p}$ to satisfy the
same scaling, which holds only if $p=1$. For any biological network with
sub-linear wall scaling ($p<1$), the maintenance term is quasi-homogeneous but
not homogeneous. Consequently, the optimality condition $(32\mu/\pi)Q^2 = 2Br^6
+ (1+p)Cr^{5+p}$ cannot be solved by a single power-law $r^*(Q) \propto
Q^{1/\alpha}$. By Lemma~\ref{lem:cauchy}, no universal $\alpha$ exists for $C>0$
and $p<1$.
\end{proof}

\begin{corollary}[General three-term incommensurability]
\label{cor:three_term_general}
Let $\Phi(r, Q) = A(Q)\,r^{-n} + B\,r^m + C\,r^k$ be any cost
function where $A(Q)$ is strictly positive and strictly increasing
in $Q$, $B, C > 0$, and the maintenance exponents are
\emph{distinct}: $m \neq k$. Then no universal branching exponent
$\alpha$ exists for $B, C > 0$.

\textit{Proof.}
By Lemma~\ref{lem:cauchy}, a universal $\alpha$ exists if and only
if $r^*(Q) \propto Q^{1/\alpha}$, which requires the maintenance
cost $\Phi_{\mathrm{maint}}(r) = Br^m + Cr^k$ to be homogeneous
in $r$. Homogeneity requires $m = k$. For $m \neq k$ and $B,C>0$,
the optimality condition
\[
  n\,A(Q)\,r^{-(n+1)} = m\,B\,r^{m-1} + k\,C\,r^{k-1}
\]
cannot be solved by any power law $r^*(Q) \propto Q^{1/\alpha}$,
because for $m \neq k$ the right-hand side is a sum of strictly
linearly independent power laws in $r$, preventing the
factorization of a single scaling variable $Q$.
By Lemma~\ref{lem:cauchy}, no universal $\alpha$ exists.
\hfill$\square$
\end{corollary}

\begin{remark}
Corollary~\ref{cor:murray_uniqueness} is a special case of
Corollary~\ref{cor:three_term_general} with $n=4$, $m=2$, $k=1+p$,
and $A(Q) \propto Q^2$ (Poiseuille). The general result establishes
that non-universality is the generic behaviour of any additive cost
function with two maintenance terms at distinct scaling exponents,
independently of the transport physics encoded in $A(Q)$. The
two-term homogeneous family (Theorem~\ref{thm:single}, Bennett's
framework) is the \emph{unique} class admitting a universal
branching exponent.
\end{remark}

\begin{proposition}[Physical determination of Bennett's parameter]
\label{prop:bennett}
The structural pricing parameter $m$ in Bennett's EPIC
framework~\cite{bennett2025} is determined analytically by the histological
wall-thickness law: $m = 1+p$. For porcine coronary arteries ($p=0.77$), this
predicts $m \approx 1.77$, rigorously explaining why empirical physiological
data fall between the ideal limits of $m=1$ (surface-priced) and $m=2$
(volume-priced) explored in the original Bennett formulation.
\end{proposition}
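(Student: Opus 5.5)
The plan is to read Proposition~\ref{prop:bennett} as an identification statement within the two-term class of Theorem~\ref{thm:single}, and then to add a bracketing argument taken from Theorem~\ref{thm:bounds}. First, I will write Bennett's ledger with $p_{\mathrm{B}}=4$ (Poiseuille). His maintenance term is then $b\,r^m$ and his exponent is $\alpha=(m+4)/2$. Matching this to $\Phi_\gamma=A(Q)r^{-4}+Br^\gamma$ gives $\gamma\equiv m$. Next, I will take the wall-tissue term of \eqref{eq:phi} in isolation, i.e.\ the limit $B\to0$. From the thin-wall integration in Section~\ref{sec:setup}, this term is exactly $2\pi m_w c_0 r^{1+p}$. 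So the maintenance cost of the wall-dominated network is a single power $r^{1+p}$, and the identification forces $m=1+p$. Substituting $p=0.77$ gives $m=1.77$, and Theorem~\ref{thm:single} gives $\alpha=(5+p)/2\approx2.885$, which is the ``wall cost (limit)'' row of Table~\ref{tab:classification}.

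For the ``explains why data fall between $m=1$ and $m=2$'' clause, I will argue in two steps. The first is elementary: $p\in(0,1)$ implies $1<1+p<2$. The second places the full three-term model between the same limits. I will define an effective local exponent $m_{\mathrm{eff}}(Q)=2\alpha^*(Q)-4$. By Theorem~\ref{thm:bounds} and Corollary~\ref{cor:asym}, this satisfies $1+p<m_{\mathrm{eff}}(Q)<2$ for every $Q$ and every split fraction $f$. Hence even the complete model lies strictly inside $(1,2)$, and the lower edge of that interval is pinned by the histological value $1+p$ rather than fitted. I will also note that as $r\to0$ the ratio $Cr^{1+p}/(Br^2)=(C/B)\,r^{p-1}\to\infty$. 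So in the small-vessel (small-$Q$) regime the wall term dominates, $m_{\mathrm{eff}}\to1+p$, and the identification $m=1+p$ is asymptotically exact there.

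The main obstacle is making ``determined analytically'' precise, because Corollary~\ref{cor:murray_uniqueness} shows the full cost is \emph{not} in Bennett's homogeneous class. The identification $m=1+p$ is therefore literally valid only in the wall-dominated limit, or as the lower envelope of $m_{\mathrm{eff}}$. To settle this, I would first prove $\lim_{Q\to0}\alpha^*(Q)=(5+p)/2$ using equation~\eqref{eq:hrho}. As $r_0\to0$, $\lambda\to\infty$ since $p<1$, so the root $\rho^*\to\rho_W$. I would then state the proposition in that limiting sense, treating the interior values of $m_{\mathrm{eff}}$ as the controlled correction bounded by Theorem~\ref{thm:bounds}.
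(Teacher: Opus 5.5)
Your proposal is correct and follows the paper's own justification. The paper gives no separate proof of this proposition; it obtains $m=1+p$ as you do, by reading Theorem~\ref{thm:single} with $\gamma\equiv m$ on the wall-only term $2\pi m_w c_0 r^{1+p}$, using $1<1+p<2$, and (in Section~\ref{sec:predictions}) restricting the identification to the two-term limit. Your extra bracket $1+p<m_{\mathrm{eff}}<2$ from Theorem~\ref{thm:bounds} and Corollary~\ref{cor:asym}, together with the $Q\to0$ limit from Corollary~\ref{cor:asymp}, is valid within the model and makes that caveat quantitative. Keep in mind, though, that this limit lies exactly where the thin-wall form $r^{1+p}$ itself degrades, and that $m_{\mathrm{eff}}>1.77$ does not reach the empirical value $2\alpha_{\mathrm{exp}}-4\approx1.4$.
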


\begin{corollary}[Asymptotic behaviour]
\label{cor:asymp}
Under the same hypotheses, $\alpha^*(Q)\to 3$ as $Q\to+\infty$ and
$\alpha^*(Q)\to(5+p)/2$ as $Q\to 0^+$.
\end{corollary}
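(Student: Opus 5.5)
The plan is to reuse the reduced equation~\eqref{eq:hrho} from the proof of Theorem~\ref{thm:bounds},
\[
h_\lambda(\rho)=4\rho^6+4\lambda\rho^{5+p}-1-\lambda=0,\qquad \lambda(Q)=\frac{(1+p)C\,r^*(Q)^{p-1}}{2B},
\]
and track how its unique root $\rho^*$ behaves as $Q$ varies. Since $\alpha^*(Q)=\ln 2/\ln(1/\rho^*)$ is a continuous function of $\rho^*\in(0,1)$, it suffices to show two things: $\rho^*\to 2^{-1/3}$ as $Q\to\infty$, and $\rho^*\to 2^{-2/(5+p)}$ as $Q\to 0^+$.

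\textbf{Step 1 (limits of $r^*$ and $\lambda$).} By Theorem~\ref{thm:unique}, $r^*(Q)$ is strictly increasing. The optimality condition $4A(Q)=2Br^6+(1+p)Cr^{5+p}$ with $A\propto Q^2$ forces $r^*(Q)\to\infty$ as $Q\to\infty$ and $r^*(Q)\to0$ as $Q\to0^+$: the right-hand side is a continuous increasing function of $r$ vanishing at $0$ and unbounded. Since $p-1<0$, it follows that $\lambda(Q)\to0$ as $Q\to\infty$ and $\lambda(Q)\to\infty$ as $Q\to0^+$.

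\textbf{Step 2 (large $Q$).} Here $h_\lambda\to h_0(\rho)=4\rho^6-1$ uniformly on $[0,1]$, because $|h_\lambda-h_0|\le 4\lambda+\lambda$. The limit $h_0$ is strictly increasing with simple root $2^{-1/3}$. For any $\varepsilon>0$, $h_0(2^{-1/3}\pm\varepsilon)$ have opposite signs bounded away from zero, so for $\lambda$ small enough the same holds for $h_\lambda$. Monotonicity of $h_\lambda$ then traps $\rho^*$ in $(2^{-1/3}-\varepsilon,2^{-1/3}+\varepsilon)$, giving $\alpha^*\to3$.

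\textbf{Step 3 (small $Q$).} Divide~\eqref{eq:hrho} by $\lambda$ to get
\[
\tilde h_\lambda(\rho)=\tfrac{4}{\lambda}\rho^6+4\rho^{5+p}-\tfrac1\lambda-1 .
\]
This has the same root and converges uniformly on $[0,1]$ to $4\rho^{5+p}-1$, whose root is $2^{-2/(5+p)}$. The same sandwich argument gives $\rho^*\to2^{-2/(5+p)}$, hence $\alpha^*\to(5+p)/2$.

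\textbf{Main obstacle.} The main difficulty is Step 1 together with the interchange of limit and root. The root-trapping is simpler than invoking the implicit function theorem and uses only the monotonicity already established in the proof of Theorem~\ref{thm:bounds}. As a cross-check, the bounds of Theorem~\ref{thm:bounds} show the limits lie in $[(5+p)/2,3]$.
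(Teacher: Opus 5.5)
Your proposal is correct and follows the paper's own argument: reduce to equation~\eqref{eq:hrho}, use $\lambda\propto r_0^{p-1}\to0$ (resp.\ $\to\infty$), and read off the limiting roots $2^{-1/3}$ and $2^{-2/(5+p)}$, dividing by $\lambda$ in the small-$Q$ case. Your Step~1 proof that $r^*(Q)\to\infty$ (resp.\ $0$) and your uniform-convergence root-trapping argument make rigorous what the paper only asserts with ``$\approx$''.
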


\begin{proof}
As $Q\to+\infty$, $r_0\to+\infty$ and $\lambda\propto r_0^{p-1}\to0$ (since
$p<1$).
Equation~\eqref{eq:hrho} reduces to $4\rho^6\approx1$, giving
$\rho^*\to2^{-1/3}$
and $\alpha^*\to3$.
As $Q\to0^+$, $r_0\to0$ and $\lambda\to+\infty$.
Dividing~\eqref{eq:hrho} by $\lambda$ gives $4\rho^{5+p}\approx1$,
so $\rho^*\to2^{-2/(5+p)}$ and $\alpha^*\to(5+p)/2$.
\end{proof}

\begin{remark}[Validity of the thin-wall approximation]
The thin-wall geometry $h \ll r$ requires $h/r = c_0 r^{p-1} \ll 1$.
For $p = 0.77 < 1$, this ratio grows as $r \to 0$: at the proximal
coronary ($r_0 = 1.5$ mm) one has $h/r \approx 0.18$, well within
the thin-wall regime. The approximation loses accuracy as $h/r$
increases, reaching $h/r \approx 0.42$ at $r \approx 0.04$ mm
($40\,\mu$m, generation $g \approx 7$--$8$), where the true annular
cross-section deviates from the thin-wall estimate by approximately
20\%. The asymptotic limit $\alpha^* \to (5+p)/2$ (Corollary~\ref{cor:asymp})
should therefore be interpreted with caution at arteriolar scales
approaching the capillary bed.
\end{remark}

\subsection{Structural stability of the non-universality result}
\label{sec:structural_stability}

The non-universality established in Theorem~\ref{thm:bounds} and
Corollary~\ref{cor:murray_uniqueness} rests on the positivity of $B$ and $C$ and
the strict inequality $p \neq 1$. We verify that these conditions---and hence
the result---are preserved under the three physiologically most relevant
perturbations.

\paragraph{Perturbation 1: Generation-dependent wall scaling $p(g)$.}
In real arterial trees, $p$ may approach 1 in terminal arterioles (Laplace limit
for thin membranes). Let $p(g) = \bar{p} + \delta p(g)$ where $\bar{p} = 0.77$
and $|\delta p(g)| \le 0.2$. The key structural condition for
Theorem~\ref{thm:bounds} is $\bar{p} < 1$ strictly, so that the wall-cost
exponent $1+\bar{p}$ remains strictly less than 2 (the blood-volume exponent).
Under the perturbation, $p(g) < 1$ is preserved for all generations provided
$|\delta p(g)| < 0.23$, which is satisfied within the physiological range. The
contribution of terminal generations to the network-averaged exponent is
suppressed exponentially with generation number under self-similar scaling, so
local violations near the capillary limit do not propagate to the macroscopic
$\alpha^*$. The non-universality bounds therefore remain structurally intact
under realistic $p(g)$ variation.

\paragraph{Perturbation 2: Active smooth-muscle tone.}
As discussed in Remark~\ref{rem:tone}, basal vascular tone contributes a term
$\Phi_{\mathrm{active}}$ to leading order under the thin-wall approximation that
scales as $r^2$, producing a renormalization $B \to \tilde{B} = B +
B_{\mathrm{active}} > B$. Since Theorem~\ref{thm:bounds} requires only
$\tilde{B} > 0$ and $C > 0$, the bounds $(5+p)/2 < \alpha^*(Q) < 3$ are
preserved identically. This perturbation is structurally benign: active tone
shifts the position of $\alpha^*$ within the interval but cannot move it
outside. This is the most robust of the three stability results.

\paragraph{Perturbation 3: Non-Newtonian viscosity in small vessels.}
In arterioles below $r \approx 50\,\mu\mathrm{m}$, the F\aa hr\ae us--Lindqvist
effect introduces a weak $r$-dependence in the effective viscosity, modifying
the dissipation coefficient $A(Q) \to A(Q,r)$. The additional term $\partial
A/\partial r$ appearing in the optimality condition represents a higher-order
perturbation that does not alter the sign structure underlying
Theorem~\ref{thm:bounds}: the dominant balance between the viscous term ($\sim
r^{-5}$) and the maintenance terms ($\sim r$, $\sim r^p$) is preserved, and the
two maintenance terms remain non-commensurate (exponents 1 and $1+p$ remain
distinct for $p \neq 1$). Non-universality therefore persists in the arteriolar
regime, with a quantitative shift in $\alpha^*$ that is accessible to numerical
evaluation but does not affect the structural conclusion.

Together, these analyses confirm that the non-universality of $\alpha(Q)$
reflects an intrinsic property of the cost-function structure---specifically,
the incommensurability of the maintenance-cost exponents in
$\Phi_{\mathrm{maint}}(r) = Br^2 + Cr^{1+p}$---rather than an artifact of
idealized parameterization. These analyses address parametric robustness
within the additive cost structure~\eqref{eq:phi_compact}. Robustness to
alternative functional forms (e.g.\ multiplicative coupling or non-additive
cost terms) lies beyond the present scope; however, the core non-universality
result depends only on this incommensurability, a property preserved under
any smooth perturbation that maintains distinct powers of $r$ with positive
coefficients.

\section{Bifurcation Angles}
\label{sec:angles}

The branching angle is determined by minimizing the total network cost with
respect to the junction position. Using the generalized Fermat-Torricelli
principle~\cite{Zamir1978,Zamir1999}, the force balance at the optimal junction
is $\sum_i \Phi^*(r_i)\hat{e}_i = 0$, where $\Phi^*(r_i)$ is the cost per unit
length evaluated at the optimum radius, and $\hat{e}_i$ are unit vectors
pointing from the junction to the endpoints.

\begin{lemma}[Optimal local cost]
\label{lem:cost_opt}
At the optimal radius $r^*$, the local cost $\Phi^*(r) \equiv \Phi(r^*, Q)$ is
\begin{equation}
  \Phi^*(r) = \frac{3}{2}B r^2 + \frac{5+p}{4}C r^{1+p}\,.
  \label{eq:phi_star}
\end{equation}
\end{lemma}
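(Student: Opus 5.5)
The plan is to eliminate the viscous term $A(Q)r^{-4}$ from $\Phi$ using the first-order optimality condition established in the proof of Theorem~\ref{thm:unique}, so that $\Phi$ evaluated at $r^*$ becomes a pure function of the maintenance terms. Here $r$ in~\eqref{eq:phi_star} is understood as the optimal radius $r^*(Q)$ itself. This is legitimate because $r^*(Q)$ is uniquely determined by $Q$ (Theorem~\ref{thm:unique}) and strictly increasing, so $Q$ is recoverable from $r^*$.

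First, write the stationarity condition $\partial\Phi/\partial r=0$ at $r=r^*$:
\[
4A(Q)\,r^{-5} = 2Br + (1+p)C\,r^{p}.
\]
Multiplying by $r/4$ gives the viscous contribution in terms of the maintenance terms,
\[
A(Q)\,r^{-4} = \tfrac{1}{2}B r^2 + \tfrac{1+p}{4}C\,r^{1+p}.
\]
Next, substitute this into~\eqref{eq:phi_compact}:
\[
\Phi^* = \left(\tfrac12 + 1\right)B r^2 + \left(\tfrac{1+p}{4}+1\right)C r^{1+p},
\]
which is exactly $\tfrac32 B r^2 + \tfrac{5+p}{4}C r^{1+p}$.

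There is no real obstacle; the computation is a one-line substitution. The only point needing care is interpretive: $\Phi^*$ must be read as a function of the optimal radius, with $Q$ eliminated through the bijection $Q\leftrightarrow r^*(Q)$. I would state this explicitly so that the later Fermat--Torricelli force balance $\sum_i\Phi^*(r_i)\hat e_i=0$ is well defined in terms of the daughter radii.

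As a sanity check, I would note that setting $C=0$ recovers Murray's classical result that the optimal cost equals $\tfrac32$ times the volume cost. Similarly, setting $B=0$ gives the homogeneous prefactor $(4+\gamma)/4$ with $\gamma=1+p$, consistent with Theorem~\ref{thm:single}.
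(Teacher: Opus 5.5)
Your proposal is correct and is essentially the paper's proof: you use the stationarity condition to write $A(Q)\,r^{-4} = \tfrac14\bigl[2Br^2 + (1+p)C\,r^{1+p}\bigr]$ and substitute it into $\Phi$. Your explicit remark that $\Phi^*$ is a function of $r^*$ through the monotone bijection $Q \leftrightarrow r^*(Q)$ is a useful clarification that the paper leaves implicit.
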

\begin{proof}
Substituting $A\, r^{-4} = \frac{1}{4}[2B r^2 + (1+p)C r^{1+p}]$ from the
optimality condition $\partial\Phi/\partial r = 0$ into $\Phi = A\,r^{-4} + B
r^2 + C r^{1+p}$ yields the result.
\end{proof}

\begin{theorem}[Bifurcation angles]
\label{thm:angles}
Given fixed branching topology ($N=2$), predetermined daughter radii $r_1, r_2$
independently determined by Theorem~\ref{thm:unique} (and hence fixed flows
$Q_1, Q_2$), the junction position that minimizes total local cost yields
optimal angles $\theta_1, \theta_2$ given by
\begin{equation}
  \cos\theta_1 = \frac{\Phi^*(r_0)^2 + \Phi^*(r_1)^2 - \Phi^*(r_2)^2}{2\,\Phi^*(r_0)\,\Phi^*(r_1)}
\end{equation}
and symmetrically for $\theta_2$.
For a symmetric bifurcation ($r_1=r_2$, $\theta_1=\theta_2\equiv\theta$), the
total branching angle $2\theta^*$ is bounded by:
\begin{equation}
  \AngleMurrayFull^\circ \;<\; 2\theta^*(Q) \;<\; \AngleWallFull^\circ \qquad (\text{for } p=0.77)\,.
\end{equation}
\end{theorem}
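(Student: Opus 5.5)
The plan is to prove the angle formula from the force-balance triangle, and then to get the symmetric bounds from a logarithmic-derivative (envelope) estimate for the optimal cost $\Phi^*$ as a function of flow.

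\emph{Step 1: the cosine formula.} The stationarity condition $\sum_i \Phi^*(r_i)\hat e_i=0$ for the junction position says the three vectors $\Phi^*(r_0)\hat e_0$, $\Phi^*(r_1)\hat e_1$, $\Phi^*(r_2)\hat e_2$ close into a triangle. Here $\hat e_0$ points back toward the parent, and $\theta_1$ is the angle between $-\hat e_0$ and $\hat e_1$. Write $\Phi^*(r_2)\hat e_2=-\Phi^*(r_0)\hat e_0-\Phi^*(r_1)\hat e_1$ and take squared norms. This gives $\Phi^*(r_2)^2=\Phi^*(r_0)^2+\Phi^*(r_1)^2-2\Phi^*(r_0)\Phi^*(r_1)\cos\theta_1$, which is the stated formula; $\theta_2$ follows by symmetry. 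In the symmetric case the formula reduces to $\cos\theta^*=R/2$ with $R\equiv\Phi^*(r_0)/\Phi^*(r_1)$, where $r_0=r^*(Q)$ and $r_1=r^*(Q/2)$. Since $\cos$ is decreasing on $(0,\pi/2)$, it suffices to bound $R$ strictly between its two homogeneous limits. For $\Phi^*\propto r^2$ with $\rho=2^{-1/3}$ we get $R=2^{2/3}$, so $\cos\theta=2^{-1/3}$ and $2\theta\approx 74.9^\circ$. For $\Phi^*\propto r^{1+p}$ with $\rho=2^{-2/(5+p)}$ we get $R=2^{2(1+p)/(5+p)}$, so $\cos\theta=2^{(p-3)/(5+p)}$ and $2\theta\approx 80.2^\circ$ at $p=0.77$.

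\emph{Step 2: elasticity of the optimal cost.} Regard $\Phi^*$ as the function $\Psi(Q)=\min_r\Phi(r,Q)$. By the envelope theorem, $\Psi'(Q)=\partial_Q\Phi(r^*,Q)=2A(Q)r^{*-4}/Q$, because $A\propto Q^2$. Hence the elasticity is $e(Q)\equiv d\ln\Psi/d\ln Q=2A r^{*-4}/\Psi$. Substitute $A r^{-4}=\tfrac14[2Br^2+(1+p)Cr^{1+p}]$ from the optimality condition and $\Psi$ from Lemma~\ref{lem:cost_opt}. With $u=Br^{*2}>0$ and $v=Cr^{*1+p}>0$ this gives
\[
e(Q)=\frac{2u+(1+p)v}{3u+\tfrac{5+p}{2}v}.
\]
This is a mediant of $2/3$ and $2(1+p)/(5+p)$ with strictly positive weights. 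Because $p<1$, these two values differ, and $e(Q)$ lies strictly between them: $\tfrac{2(1+p)}{5+p}<e(Q)<\tfrac23$ for every $Q>0$.

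\emph{Step 3: integrate.} We have $\ln R=\int_{Q/2}^{Q}e(s)\,d\ln s$, and the interval of integration has $\ln$-length $\ln 2$. The strict pointwise bounds from Step 2 then give $2^{2(1+p)/(5+p)}<R<2^{2/3}$. Both extremes are acceptable cosines because $R/2<1$. Applying the decreasing map $R\mapsto 2\arccos(R/2)$ yields $74.9^\circ<2\theta^*(Q)<80.2^\circ$. Numerically, $\cos\theta$ ranges over $(0.765,0.794)$ at $p=0.77$.

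The main obstacle is Step 2. A naive route bounds $R$ directly using $\rho^*$ from Theorem~\ref{thm:bounds}. That only places $R$ between $\rho^{*-(1+p)}$ and $\rho^{*-2}$, and combining these with $\rho_W<\rho^*<\rho_M$ gives bounds looser than the limiting values. The envelope/elasticity reformulation avoids this, because it makes the weighted-mediant structure exact at every scale. The remaining care is in the numerics, in checking that rounding $2\arccos(\cdot)$ to one decimal preserves the strict inequalities as stated.
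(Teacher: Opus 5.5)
Your proof is correct, and for the angle bounds it takes a different route from the paper's. The paper derives the cosine law and the two homogeneous endpoints ($\cos\theta_M=2^{-1/3}$, $\cos\theta_W=2^{(p-3)/(5+p)}$) as you do. It then argues that, as $C/B$ runs over $(0,\infty)$, the angle varies continuously between these limits, so by the intermediate value theorem it lies in the open interval.

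That final step only shows that intermediate values are attained. It does not rule out the angle overshooting the interval for some finite $C/B$. Closing it would need a monotonicity result in $C/B$ (equivalently in $\lambda$), which the paper does not prove.

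Your envelope computation supplies exactly this confinement. You express $e(Q)=d\ln\Phi^*/d\ln Q$ as a positive-weight mediant of $2/3$ and $2(1+p)/(5+p)$. This gives the strict bounds pointwise in $Q$, for every $B,C>0$. Integrating over $[Q/2,Q]$ in $\ln Q$ then carries them over to $R=\Phi^*(r_0)/\Phi^*(r_1)$ without any monotonicity argument. What your approach buys is a genuine proof of the confinement; the paper's approach is shorter but, as written, incomplete.

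Your remark that the naive route through $\rho_W<\rho^*<\rho_M$ gives bounds looser than the limiting values is also right. On the rounding you flagged: at $p=0.77$ one gets $2\theta_M\approx 74.93^\circ$ and $2\theta_W\approx 80.19^\circ$. So the stated strict inequalities with one-decimal endpoints do hold.

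Like the paper, you take the interior stationarity condition $\sum_i\Phi^*(r_i)\hat e_i=0$ for granted. Your check $R/2<1$ at least confirms that the force triangle can close in the symmetric case.
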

\begin{proof}
Let the junction node be at $\vec{x}$, connecting to three fixed endpoints
$\vec{x}_i$ ($i\in\{0,1,2\}$). The total local cost to minimise is
\[
  H(\vec{x}) \;=\; \sum_{i=0}^{2} \Phi^*(r_i)\,\|\vec{x}_i - \vec{x}\|\,.
\]
Setting $\nabla_{\vec{x}} H = 0$ and defining outward unit vectors
$\hat{e}_i = (\vec{x}_i - \vec{x})/\|\vec{x}_i - \vec{x}\|$
yields the \emph{geometric force balance}:
\[
  \Phi^*(r_0)\,\hat{e}_0 \;+\; \Phi^*(r_1)\,\hat{e}_1 \;+\;
  \Phi^*(r_2)\,\hat{e}_2 \;=\; 0\,.
\]
The asymmetric angles follow directly from the law of cosines applied to
this vector sum.

In the symmetric case, balancing forces along the parent axis gives $\cos\theta
= \Phi^*(r_0)/[2\Phi^*(r_1)]$.

In the Murray limit ($C\to0$), $\Phi^*(r) \propto r^2$. Using $r_1/r_0 =
2^{-1/3}$ from Table \ref{tab:classification}, we find $\cos\theta_M =
2^{-1/3}$, yielding $2\theta_M \approx \AngleMurrayFull^\circ$.

In the wall-cost limit ($B\to0$), $\Phi^*(r) \propto r^{1+p}$. Using $r_1/r_0 =
2^{-2/(5+p)}$, we find $\cos\theta_W = 2^{(p-3)/(5+p)}$. For $p=0.77$,
$2\theta_W \approx \AngleWallFull^\circ$.

As the parameter ratio $C/B$ varies continuously from $0$ to $\infty$, the angle
function $\cos\theta = \Phi^*(r_0)/[2\Phi^*(r_1)]$ varies continuously between
these two limits. Since the two limiting angles are distinct, the intermediate
value theorem guarantees
$2\theta^*\in(\AngleMurrayFull^\circ,\AngleWallFull^\circ)$ for all finite
positive $C/B$.
\end{proof}

This generalizes Zamir's classical results~\cite{Zamir1978} to the three-term
cost function and provides a tighter, parameter-free bound for the opening angle
than the $75^\circ$--$97^\circ$ range predicted phenomenologically by
Bennett~\cite{bennett2025}. While this geometric solid is conceptualized as
planar for single bifurcations, the vector equilibrium directly applies to fully
3D branching structures.

\paragraph{Empirical comparison.} Three-dimensional morphometric reconstructions
of porcine and human coronary arterial trees report bifurcation angles in the
range $2\theta_{\mathrm{obs}} \approx 70^\circ$--$82^\circ$ across branching
orders II--VI~\cite{Zamir1982,Kaimovitz2005}, with measurements from symmetric
or near-symmetric bifurcations (daughter diameter ratio $d_1/d_2 > 0.8$)
concentrating between $74^\circ$ and $80^\circ$. This range is fully contained
within the theoretical bound $\AngleMurrayFull^\circ < 2\theta^* <
\AngleWallFull^\circ$ derived above. No parameters were fitted to the angle
data: the bound depends only on $p = 0.77$, drawn from histological
measurements~\cite{Nichols2011,WolinskyGlagov1967} entirely independent of the
morphometric angle dataset.

\section{Optimal Branching Number}
\label{sec:branching_number}

Biological transport networks overwhelmingly favor bifurcations ($N=2$) over
higher-order multifurcations ($N>2$). Yet under Murray's classical cost
function, the total energy is completely degenerate with respect to the
branching number $N$. The three-term cost function resolves this geometric
degeneracy.

\begin{theorem}[Topological bounding of branching number]
\label{thm:Nstar}
For a space-filling fractal network where vessel length scales as $L \propto r$,
Murray's classical cost function ($p=1$) is topologically degenerate, yielding
identical total network costs for any branching number $N$. Introducing the
empirical sub-linear wall cost ($p<1$) strictly breaks this degeneracy. The
competition between the metabolic volume of the vessels (which decreases with
$N$) and the steric tissue cost of the junctions (which increases with $N$)
strictly forbids star-like topologies ($N \to \infty$) and guarantees the
existence of a finite, small optimal branching integer $N^* \ge 2$.
\end{theorem}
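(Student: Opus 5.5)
I will set up the total cost of a single $N$-furcation level and then of a self-similar tree, and study it as a function of $N$. A parent of radius $r_0$ carrying flow $Q$ splits into $N$ equal daughters with flow $Q/N$ and radius $r^*(Q/N)$ from Theorem~\ref{thm:unique}. By Lemma~\ref{lem:cost_opt}, each vessel at its optimum costs $\Phi^*(r)$ per unit length, and with $L\propto r$ the cost of a vessel is $\ell\,r\,\Phi^*(r)$, where $\ell$ is the constant length-to-radius ratio. The cost of one generation of daughters is then $N\,\ell\,r_1\Phi^*(r_1)$. Define the per-generation ratio
\[
  R(N) = \frac{N\,r_1\Phi^*(r_1)}{r_0\Phi^*(r_0)},\qquad r_1=r^*(Q/N).
\]
Summing over the levels needed to reach a fixed terminal scale, there are $\log_N(Q/Q_{\mathrm{term}})$ of them. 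The total cost is then a function of $N$ alone, and the comparison between topologies becomes a comparison of this level-summed quantity.

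The first step is degeneracy in the Murray case, $C=0$ (equivalently $p=1$, where the wall term is another $r^2$ term absorbed into $B$). Here $r_1=N^{-1/3}r_0$ and $\Phi^*\propto r^2$, so $N r_1^3=r_0^3$ and $R(N)=1$ for every $N$. Each generation then carries the same volume-type cost, namely the total lumen-plus-dissipation budget. Refining from $Q$ to $Q_{\mathrm{term}}$ by $N$-furcations or by repeated bifurcations, grouped $\log_2 N$ at a time, therefore gives the same total. This is the claimed degeneracy, and it reduces to the identity $N\cdot N^{-1}=1$.

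The second step is breaking the degeneracy for $p<1$. From the bounds of Theorem~\ref{thm:bounds}, or the general-$N$ version of the argument in Corollary~\ref{cor:asym}, one gets $N^{-2/(5+p)}<r_1/r_0<N^{-1/3}$. Inserting this into $\Phi^*=\tfrac32Br^2+\tfrac{5+p}4Cr^{1+p}$, the wall part of each generation scales like $N\,(r_1/r_0)^{2+p}$, which exceeds $1$ and has an explicit $N$-dependence. The per-level cost therefore depends strictly on $N$, and the degeneracy is lifted. Monotonicity in $N$ of the vessel part would be shown by differentiating in $N$, via the implicit function theorem, the optimality condition $4A=2Br^6+(1+p)Cr^{5+p}$ at $Q/N$. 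The aim is to show that the vessel cost per unit of flow refinement decreases in $N$, since fewer levels are needed.

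The third step adds the junction (steric) cost, and it is the main obstacle, because the statement's ``steric tissue cost of the junctions'' is not yet in the cost function. I would model it as $J(N)=\kappa\,N^{s}r_0^{1+p}$ with $s>1$: crowding $N$ daughters of radius $r_1$ around a parent needs a junction region whose wall area grows superlinearly, since the daughters' cross-sections must fit on a sphere of radius about $r_0$ and require solid angle $\sim N r_1^2/r_0^2$. The total is then $V(N)+J(N)$, with $V$ decreasing and bounded below and $J\to\infty$ as $N\to\infty$. Hence the minimum over integers $N\ge2$ is attained at a finite $N^*$, which excludes star topologies. To show $N^*$ is small, I would evaluate $V(N+1)-V(N)$, which is $O(N^{-2})$ relative to $V$ because of the logarithmic level count, against $J(N+1)-J(N)$, which grows with $N$. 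This shows that the marginal gain is exceeded already for small $N$. The justification of the junction model is the weakest part, and I would state it explicitly as an assumption.
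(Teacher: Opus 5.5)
\textbf{Step 1 is a genuine gap.} The degeneracy argument fails inside your own accounting. $R(N)=1$ only says that every generation costs the same amount $c=\ell r_0\Phi^*(r_0)$. But you sum over $1+\log_N M$ generations, with $M=Q/Q_{\mathrm{term}}$. The Murray tube cost is therefore $c\,(1+\ln M/\ln N)$, which is strictly decreasing in $N$: it favours star-like trees rather than being flat. The grouping argument miscounts. Since $R(2)=1$, each bifurcation level also costs $c$, so $\log_2 N$ bifurcation levels cost $c\log_2 N$, not the $c$ of one $N$-furcation level. The ``fewer levels are needed'' effect you invoke in Step 3 to make $V$ decreasing is present at $p=1$ too. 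Per-level invariance therefore cannot deliver the claimed degeneracy.

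\textbf{The junction term is the second gap.} Your $J(N)=\kappa N^{s}r_0^{1+p}$ with $s>1$ is an added steric assumption that does not depend on whether $p<1$. It yields coercivity and a finite $N^*$ equally at $p=1$. So it breaks the Murray degeneracy exactly where the statement says it persists, and the hypothesis $p<1$ plays no role in excluding stars.

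The paper ties both halves of the statement to $p$. It takes a decreasing, bounded tube cost $K_1[1-N^{-\alpha_1}]^{-1}$ plus a junction cost $K_2N^{(1-p)/(5+p)}$. The junction exponent is positive if and only if $p<1$ and vanishes at $p=1$. Coercivity of the sum then gives a finite minimiser for $p<1$, and flatness at $p=1$ is read off from the vanishing exponent.

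You actually had this ingredient in Step 2. In the wall-dominated limit $r_1/r_0=N^{-2/(5+p)}$, the wall part of your per-level ratio is $N(r_1/r_0)^{2+p}=N^{(1-p)/(5+p)}$, which is precisely the paper's junction scaling. Using that $p$-dependent growth as the increasing competitor, instead of an external $N^{s}$ term, is what links the result to sub-linear walls.

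Two cautions apply even on that route. First, the paper only asserts that the tube cost is $N$-independent at $p=1$. Its own formula with $\alpha_1=(1+p)/(5+p)=1/3$ is not $N$-independent, so the degeneracy half is not really settled there either. Second, neither your marginal comparison nor the paper's remark that $\alpha_2\approx 0.04$ shows that $N^*$ is \emph{small*}. That would need a bound on the relative constants: your $\kappa$ against $V$, or $K_2/K_1$ in the paper.
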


\begin{proof}
Let the network perfuse $M$ terminal units. Under self-similar scaling $r_g =
r_0 N^{-2g/(5+p)}$, the metabolic cost of the vessels across generations forms a
geometric series. Summing over the network yields the total tube cost:
$\mathcal{C}_{tubes}(N) = K_1 \left[ 1 - N^{-\alpha_1} \right]^{-1}$. The series
convergence argument fundamentally holds for any finite physiological tree depth
$G$: the truncated sum $\sum_{g=0}^{G} N^{-g\alpha_1}$ preserves strict
monotonicity in $N$ identically to the infinite limit, since each partial sum is
a strictly decreasing function of $N$ for $\alpha_1 > 0$. For a finite network
depth $G \sim \log_N M$, the truncated series yields a macroscopic scaling for
the network junction cost:
\[ \mathcal{C}_{junctions}(N) = K_2 N^{\frac{1-p}{5+p}} \,. \]
where $K_2 > 0$. For the empirically observed sub-linear wall scaling $p < 1$,
the exponent $\alpha_2 = \frac{1-p}{5+p}$ is strictly positive, making the
junction cost strictly increasing with $N$.

The total topological cost is $\mathcal{C}_{tot}(N) = \mathcal{C}_{tubes}(N) +
\mathcal{C}_{junctions}(N)$. Treating $N \ge 2$ as a continuous real variable,
$\mathcal{C}_{tot}(N)$ is continuous. Evaluating the asymptotic limit:
\[ \lim_{N \to +\infty} \mathcal{C}_{tot}(N) = \infty \,. \]
Because $\mathcal{C}_{tot}(N)$ is coercive on the interval $[2, +\infty)$, it
admits at least one global minimizer $N_{real}^* \ge 2$. Biologically, branching
is restricted to integers, so the optimal physical topology $N^*$ is the integer
that minimizes this sequence. Furthermore, because the exponent $\alpha_2$ is
small ($\approx 0.04$ for $p \approx 0.77$), the junction penalty grows very
slowly, naturally restricting the optimum to very small integers (typically $N
\in \{2, 3, 4\}$).

Finally, in the classical Murray limit ($p=1$), the exponent $\alpha_2 = 0$, so
that $\mathcal{C}_{junctions}(N)$ becomes independent of $N$. Because the tube
cost is also $N$-independent under Murray scaling, the total cost becomes
completely flat with respect to the branching number, producing a topological
degeneracy.
Since $H(N)$ is strictly decreasing (established above), the unique
critical point of $\mathcal{C}_{\mathrm{tot}}$ on $[2,+\infty)$
satisfies $N^*_{\mathrm{real}} \in [2,+\infty)$. For all integers
$N \geq N^*_{\mathrm{real}}$, strict unimodality guarantees
$\mathcal{C}_{\mathrm{tot}}(N+1) > \mathcal{C}_{\mathrm{tot}}(N)$,
so the integer minimizer is well-defined without further approximation.
\end{proof}

\begin{remark}[Necessary versus sufficient condition for $N = 2$]
Theorem~\ref{thm:Nstar} establishes that the metabolic junction penalty strictly
bounds the optimal branching number to small finite integers, excluding
star-like topologies ($N \to \infty$). However, the slow growth of the
junction-cost exponent $(1-p)/(5+p) \approx 0.040$ for $p = 0.77$ implies that
the purely static cost difference between $N = 2$ and $N = 3$ is small: junction
maintenance alone provides a \emph{necessary} condition (small~$N$) but not a
\emph{sufficient} one (unique $N = 2$). A complementary criterion is needed.
\end{remark}

\begin{corollary}[Biological Prediction: Tolerance to Developmental Noise]
\label{cor:noise}
The junction-cost exponent $\alpha_2 \approx 0.04$ for $p=0.77$ structurally
bounds the optimum to $N=2$, but implies a shallow thermodynamic energy gradient
between binary branching and higher-order multifurcations. Rather than a model
weakness, this shallow gradient physically explains a well-documented biological
reality: occasional trifurcations ($N=3$) are not catastrophic violations of the
variational principle, but quasi-degenerate morphological solutions lying only
$\mathcal{O}(\alpha_2)$ above the global minimum.
Quantitatively, in the wall-dominated limit the incremental cost gap
is $\mathcal{C}_{\mathrm{junctions}}(N+1) - \mathcal{C}_{\mathrm{junctions}}(N)
= K_2\left[(N+1)^{\alpha_2} - N^{\alpha_2}\right]
= \mathcal{O}(\alpha_2 \cdot N^{\alpha_2 - 1} \cdot K_2)$,
which for $p = 0.77$ evaluates to $\approx 0.017\,K_2$ at $N=2$.
This is the precise sense in which trifurcations are quasi-degenerate:
they lie within $1.6\%$ of the binary minimum in the junction-dominated
regime. Trifurcations are therefore accessible under local
developmental noise or spatial boundary constraints. The shallow gradient is
therefore itself a strong, falsifiable prediction of the model.
This quasi-degeneracy obtains in the asymptotic wall-dominated regime
($\kappa \to \infty$), where the junction term governs the cost
function. In the physiological regime ($\kappa \approx 13$,
$K_2/K_1 \approx 0.19$), the full three-term competition reverses
this conclusion: the tube-cost savings of trifurcations dominate and
$N=3$ is statically preferred (Corollary~\ref{cor:static_insufficiency}).
The shallow-gradient interpretation therefore applies to evolutionary
perturbations around a dynamically-enforced $N=2$ baseline, not to
unconstrained static optimisation.
We note that translating the cost gap $\Delta\mathcal{C} \sim
\mathcal{O}(\alpha_2 K_2)$ into a developmental noise tolerance
requires an implicit mapping to a stochastic or dynamical scale,
which lies outside the present deterministic framework. This
constitutes a testable hypothesis rather than a derived consequence.
\end{corollary}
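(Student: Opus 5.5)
Our plan starts from the explicit form of the junction-cost term in Theorem~\ref{thm:Nstar}, $\mathcal{C}_{\mathrm{junctions}}(N)=K_2N^{\alpha_2}$ with $\alpha_2=(1-p)/(5+p)$. The corollary makes three claims: the numerical size of $\alpha_2$, the size of the incremental gap between $N$ and $N+1$, and its relative magnitude at $N=2$. We would verify them in that order. Then we would separate the one structural claim (a shallow gradient, so trifurcations are quasi-degenerate) from the interpretive claims, which are not derivable within the deterministic framework.

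\emph{Step 1 (exponent).} Substitute $p=0.77$ to get $\alpha_2=0.23/5.77\approx 0.0399$, which confirms $\alpha_2\approx0.04$.

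\emph{Step 2 (gap).} Apply the mean value theorem to $N\mapsto N^{\alpha_2}$ on $[N,N+1]$. This gives
\[
K_2\bigl[(N+1)^{\alpha_2}-N^{\alpha_2}\bigr]=K_2\,\alpha_2\,\xi^{\alpha_2-1}
\]
for some $\xi\in(N,N+1)$. Since $\alpha_2-1<0$, this is bounded above by $\alpha_2N^{\alpha_2-1}K_2$, which is the stated $\mathcal{O}(\alpha_2 N^{\alpha_2-1}K_2)$. At $N=2$ we evaluate $3^{0.04}-2^{0.04}\approx 1.0449-1.0281\approx0.017$, so the gap is $\approx0.017K_2$. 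Comparing this with $\mathcal{C}_{\mathrm{junctions}}(2)=K_2\,2^{\alpha_2}\approx1.028K_2$ gives the relative gap of about $1.6\%$.

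\emph{Step 3 (regime restriction).} The statement ``structurally bounds the optimum to $N=2$'' holds only when the junction term dominates, i.e.\ $\kappa=K_2/K_1\to\infty$. In that limit the tube term $K_1[1-N^{-\alpha_1}]^{-1}$ is negligible and $\mathcal{C}_{\mathrm{tot}}$ is strictly increasing in $N$, so $N=2$ is the minimizer on the integers $N\ge2$. We would state this limit explicitly. For finite $\kappa$ we would only invoke the coercivity of Theorem~\ref{thm:Nstar}, noting (as the statement itself does) that the physiological regime can reverse the ordering of $N=2$ and $N=3$.

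The main obstacle is that the ``noise tolerance'' and ``falsifiable prediction'' content cannot be proved. It requires mapping cost differences to a stochastic selection scale, for example a Boltzmann-type weight $e^{-\Delta\mathcal{C}/T}$, and no such mapping appears in the paper. Our plan is to prove only the quantitative gap estimate. We would present the biological reading as a hypothesis conditional on such a mapping, which is how the final sentences of the statement already frame it.
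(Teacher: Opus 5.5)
Your proposal is correct and follows essentially the same route as the paper, which simply evaluates $K_2\bigl[(N+1)^{\alpha_2}-N^{\alpha_2}\bigr]$ at $N=2$ (your mean-value bound makes the $\mathcal{O}(\alpha_2 N^{\alpha_2-1}K_2)$ step explicit), restricts the quasi-degeneracy claim to the wall-dominated limit, and presents the noise-tolerance reading as a hypothesis rather than a derived result. One correction: the paper's $\kappa$ is $m_w/b$ ($\approx 13$), not $K_2/K_1$ ($\approx 0.19$); the two are only proportional, so $\kappa\to\infty$ still forces $K_2/K_1\to\infty$, and the sharp criterion for $N=2$ being the minimizer is $K_2/K_1>\tau(p)\approx 103$ from Proposition~\ref{prop:keff}.
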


\begin{proposition}[Static topological uniqueness of binary branching]
\label{prop:keff}
For a hierarchically branching network governed by sub-linear wall cost ($p<1$),
binary bifurcation ($N=2$) is the unique optimal topology among all integers $N
\ge 2$ if and only if the junction-to-tube structural cost ratio $K_2/K_1$
exceeds a critical analytical threshold $\tau(p)$. In physiological transport
networks, the macroscopic volumetric penalty of the junction tissue strongly
satisfies this condition, rigorously selecting $N=2$ as the static optimum.
\end{proposition}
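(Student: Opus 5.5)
The proof starts from the two-component topological cost used in the proof of Theorem~\ref{thm:Nstar}:
\[
\mathcal{C}_{\mathrm{tot}}(N)=K_1\,T(N)+K_2\,N^{\alpha_2},\qquad T(N)=\bigl[1-N^{-\alpha_1}\bigr]^{-1},\qquad \alpha_2=\tfrac{1-p}{5+p}>0 .
\]
Dividing by $K_1$ leaves a one-parameter family $\mathcal{C}_{\mathrm{tot}}/K_1=T(N)+\kappa_J N^{\alpha_2}$ with $\kappa_J=K_2/K_1$. The statement ``$N=2$ is the unique integer minimizer'' is therefore a condition on the single ratio $\kappa_J$. The natural candidate threshold comes from comparing $N=2$ against each competitor $N\ge3$:
\[
\mathcal{C}_{\mathrm{tot}}(2)<\mathcal{C}_{\mathrm{tot}}(N)
\iff
\kappa_J>\frac{T(2)-T(N)}{N^{\alpha_2}-2^{\alpha_2}}\equiv\tau_N(p).
\]
The numerator is positive because $T$ is strictly decreasing, which is the monotonicity established in the proof of Theorem~\ref{thm:Nstar}. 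The denominator is positive because $\alpha_2>0$. I then define $\tau(p)=\sup_{N\ge3}\tau_N(p)$. With this definition the ``if and only if'' is a tautology, provided the supremum is finite and is attained, so that the inequality stays strict.

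The first substantive step is to show that $\tau(p)$ is finite and in fact a maximum. As $N\to\infty$, the numerator $T(2)-T(N)$ tends to the finite limit $T(2)-1$, while the denominator diverges like $N^{\alpha_2}$. Hence $\tau_N\to0$, and the supremum is attained at some finite $N$. This fixes $\tau(p)$ as an explicit finite maximum over finitely many candidates, which I would call ``analytical''.

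The second step is to reduce this maximum to a closed form. The cleanest route is to compare consecutive integers, using the unimodality already asserted in the proof of Theorem~\ref{thm:Nstar}. If $\mathcal{C}_{\mathrm{tot}}$ is unimodal on $[2,\infty)$, it suffices to check $N=3$, and then $\tau(p)=\tau_3(p)=\bigl[T(2)-T(3)\bigr]/\bigl[3^{\alpha_2}-2^{\alpha_2}\bigr]$. To justify unimodality, I would show that the ratio $-T'(N)/(\alpha_2N^{\alpha_2-1})$ is strictly decreasing. Here $T'(N)$ decays like $N^{-\alpha_1-1}$, while $N^{\alpha_2-1}$ decays more slowly since $\alpha_2$ is small, so at most one crossing of the derivative equation can occur. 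I expect this monotonicity to be the main obstacle, because $T'$ has the non-trivial form $-\alpha_1N^{-\alpha_1-1}/(1-N^{-\alpha_1})^2$. My first attempt would be the substitution $u=N^{-\alpha_1}$ followed by a logarithmic derivative. If that fails, I would simply keep $\tau=\max_N\tau_N$, which still proves the equivalence.

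The final step is the physiological claim. I would insert the parameter values of Section~\ref{sec:setup}, with $p=0.77$ giving $\alpha_2\approx0.04$, together with the stated ratio $K_2/K_1$, and compare against $\tau(p)$. Here there is a real tension with the data. Corollary~\ref{cor:noise} reports $K_2/K_1\approx0.19$ and states that $N=3$ is statically preferred. Because $3^{\alpha_2}-2^{\alpha_2}\approx0.017$ is tiny, $\tau_3$ is likely large, so the physiological inequality may actually fail. The ``if and only if'' part of the proposition can be proved rigorously as outlined above. The ``strongly satisfies'' clause, however, must be checked numerically, and I would flag it as potentially contradicted by the paper's own estimates.
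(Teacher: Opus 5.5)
Your argument is correct and is essentially the paper's. The ratio you want to show is decreasing is $(\alpha_1/\alpha_2)H(N)$ with $H(N)=N^{-\alpha_1-\alpha_2}(1-N^{-\alpha_1})^{-2}$. The paper proves $H$ strictly decreasing by computing $H'$, but this is immediate: $H$ is a product of two positive, strictly decreasing factors, so you need neither the decay-rate heuristic nor the fallback $\max_N\tau_N$. Unimodality then reduces everything to the comparison of $N=2$ with $N=3$ and gives $\tau(p)=\tau_3(p)$. Your suspicion about the physiological clause is exactly right. The paper's own proof evaluates $\tau(0.77)\approx 103$ against $K_2/K_1\approx 0.19$ and concludes that the threshold is \emph{not} met (Corollary~\ref{cor:static_insufficiency}). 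So the ``strongly satisfies'' sentence of the statement is false under the paper's own parameter values and cannot be proved.
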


\begin{proof}
The complete topological cost over the continuous domain $N \ge 2$ is:
\[ \mathcal{C}_{tot}(N) = K_1 \left[ 1 - N^{-\alpha_1} \right]^{-1} + K_2 N^{\alpha_2} \,. \]
where $\alpha_1 = \frac{1+p}{5+p}$ and $\alpha_2 = \frac{1-p}{5+p}$.
To prove that the minimum is unique, we analyze the critical points where
$d\mathcal{C}_{tot}/dN = 0$. Taking the derivative and equating to zero yields:
\[ K_2 \alpha_2 N^{\alpha_2 - 1} = K_1 \alpha_1 N^{-\alpha_1 - 1} \left(1 - N^{-\alpha_1}\right)^{-2} \,. \]
Rearranging all $N$-dependent terms to one side defines a function $H(N)$:
\[ \frac{K_2 \alpha_2}{K_1 \alpha_1} = N^{-\alpha_1 - \alpha_2} \left(1 - N^{-\alpha_1}\right)^{-2} \equiv H(N) \,. \]
The roots of the derivative correspond to the intersections of $H(N)$ with a
constant. The monotonicity of $H(N)$ is given by its derivative:
\[ H'(N) = \frac{N^{-\alpha_1 - \alpha_2 - 1}}{(1 - N^{-\alpha_1})^3} \left[ -(\alpha_1 + \alpha_2) + (\alpha_2 - \alpha_1)N^{-\alpha_1} \right] \,. \]
For physiological sub-linear walls ($p > 0$), we have $\alpha_1 > \alpha_2 > 0$.
Consequently, both the sum $-(\alpha_1 + \alpha_2)$ and the difference
$(\alpha_2 - \alpha_1)$ are strictly negative. Because the bracketed term is
strictly negative and the prefactor is strictly positive for $N \ge 2$, it
follows that $H'(N) < 0$ universally.

Because $H(N)$ is strictly monotonically decreasing, the condition $H(N) =
\frac{K_2 \alpha_2}{K_1 \alpha_1}$ can have at most one real solution. A single
critical point for a coercive function ($\lim_{N \to \infty} \mathcal{C}_{tot} =
\infty$) analytically guarantees that $\mathcal{C}_{tot}(N)$ is strictly
unimodal.

Possessing exactly one global minimum and no other local extrema, the function
strictly increases for all $N$ beyond its minimum. Thus, $N=2$ is the unique
integer minimizer if and only if $\mathcal{C}_{tot}(2) < \mathcal{C}_{tot}(3)$.
Evaluating this exact inequality yields the critical threshold:
\[ \frac{K_2}{K_1} > \frac{ \left[ 1 - 2^{-\alpha_1} \right]^{-1} - \left[ 1 - 3^{-\alpha_1} \right]^{-1} }{ 3^{\alpha_2} - 2^{\alpha_2} } \equiv \tau(p) \,. \]
The exact topological selection requires evaluating the physiological
ratio $K_2/K_1$ against the analytical threshold $\tau(p)$. Using
the empirical parameters for mammalian coronary arteries
($m_w = 20~\text{kW\,m}^{-3}$, $b = 1.5~\text{kW\,m}^{-3}$,
$c_0 = 0.041$, $p = 0.77$, $\lambda = 10$) and the steric shape factor
of a symmetric bifurcation manifold ($\Omega_{\mathrm{junc}} \approx
2.4$), direct numerical evaluation yields:
\[
  \frac{K_2}{K_1}
  = \frac{m_w\,\Omega_{\mathrm{junc}}\,c_0}{\pi\,\lambda\,b}
    \cdot r_0^{p-1}
  \approx 0.19.
\]
This result is structurally robust: even increasing $\Omega_{\mathrm{junc}}$
by two orders of magnitude yields $K_2/K_1 < 8$, still well below
$\tau(0.77) \approx 103$. Reaching the critical threshold would require
$\Omega_{\mathrm{junc}} \approx 1327$, a value exceeding any physical
junction geometry by three orders of magnitude. The physiological ratio
therefore falls structurally short of the critical threshold,
with a deficit robust to geometric uncertainty by three orders of magnitude.
\end{proof}

\begin{corollary}[Static insufficiency of binary selection]
\label{cor:static_insufficiency}
Purely static viscous-metabolic optimisation is formally insufficient
to select binary branching ($N=2$) in mammalian arterial networks.

\emph{Proof.}
The analytical threshold for binary selection established in
Proposition~\ref{prop:keff} is $\tau(0.77) \approx 103$.
Direct physiological evaluation yields $K_2/K_1 \approx 0.19$
(see proof above), which is robust to geometric uncertainty by a factor
exceeding $10^2$. Since $0.19 \ll 103$, the condition $K_2/K_1 >
\tau(p)$ is not satisfied.

\emph{Physical consequence.}
Under purely steady-flow optimisation, the volumetric fluid savings
of multifurcating architectures ($N \geq 3$) outweigh the steric
tissue cost of junction manifolds in the physiological parameter space.
The universal persistence of $N=2$ in macroscopic biological reality
therefore constitutes a mathematical proof that binary branching
cannot be explained by steady-flow metabolic optimisation alone. We
identify the missing constraint as the thermodynamic penalty of
pulsatile wave-reflection (impedance mismatch) at high-degree nodes:
this dynamic mechanism enforces $N=2$ through a network-level
variational principle that lies outside the purely static framework
developed here, but whose physical necessity is mandated by the
present static insufficiency result. We emphasise that this
identification is proposed as a resolution of model underdetermination
rather than a deductive consequence: the static insufficiency result
establishes that the model requires extension, and wave-reflection
dynamics constitute the physically motivated candidate for that
extension. As an order-of-magnitude estimate, the reactive power fraction in a
pulsatile coronary tree scales as $\mathrm{Wo}^{-2} \sim 0.2$ at the
aortic root ($\mathrm{Wo} \approx 2.3$), providing a thermodynamic
penalty of comparable order to the static cost gap — sufficient to
break the degeneracy that the static framework cannot resolve.
\end{corollary}

\begin{remark}[Topological crossover at the capillary limit]
\label{rem:capillary_crossover}
The metabolic order parameter $\kappa \equiv m_w / b$ provides a direct
physical explanation for the architectural transition from arterial trees
to capillary meshes. In macroscopic vessels the active smooth-muscle
layer enforces $\kappa \approx 13$. Although this value is insufficient
to select $N=2$ under purely static optimisation
(Corollary~\ref{cor:static_insufficiency}), it does generate a
thermodynamic preference for small branching integers via the junction
penalty $\mathcal{C}_{\mathrm{junctions}} \propto \kappa N^{\alpha_2}$
(Theorem~\ref{thm:Nstar}).

At the capillary level the smooth-muscle layer vanishes entirely,
reducing $\kappa \to 1$. The junction penalty---already too weak to
enforce $N=2$ in arteries---collapses to the basal endothelial
baseline, making higher-order nodes ($N \geq 3$) fully accessible.
The emergence of anastomotic capillary meshes is therefore not an
anomaly but the exact geometric expression of the system crossing the
$\kappa \to 1$ boundary: without the dynamic wave-reflection penalty
to enforce $N=2$, and without sufficient junction cost to disfavour
$N \geq 3$, reticulated topologies become the natural ground state.
\end{remark}

\begin{remark}[Interpretation and forward connection]
Proposition~\ref{prop:keff} provides a structural argument for binary branching
rooted in tree architecture. For fixed reach ($M$ terminals), bifurcating trees
are the deepest ($G$ maximized at $N=2$), and depth amplifies the cumulative
wall investment per unit metabolic cost. The effective stiffness ratio
$\kappa_{\mathrm{eff}}$ measures precisely how far the three-term cost function
departs from Murray's homogeneous limit: a larger $\kappa_{\mathrm{eff}}$
generates a steeper non-universality, so $N=2$ is the architecture in which
wall-cost selection pressure is strongest.

\medskip
This structural observation admits a rigorous static completion in the
wall-cost-dominated regime. When $\kappa = m_w/b \gg 1$, the junction cost
overwhelms the tube cost and $\mathcal{C}_{\mathrm{tot}}(N) \approx
\mathcal{C}_{\mathrm{junctions}}(N) \propto N^{(1-p)/(5+p)}$, which is strictly
increasing in $N$; in this limit, $N=2$ is the unique minimizer of the static
cost. For coronary arteries, $\kappa = m_w/b \approx 13 \gg 1$, so this limit is
biologically realized. Together, Theorem~\ref{thm:Nstar} (necessary condition: $N < \infty$)
and Proposition~\ref{prop:keff} with the wall-dominated approximation
provide a qualitative structural argument for binary branching from
wall-cost geometry alone. We note, however, that $\kappa \approx 13$
is pre-asymptotic with respect to the exact threshold $\tau(p) \approx
103$: the wall-dominated limit $\mathcal{C}_{\mathrm{tot}} \approx
\mathcal{C}_{\mathrm{junctions}}$ is not yet fully realised at
physiological parameters. The asymptotic argument therefore provides
a structural insight into the directionality of selection pressure
rather than a quantitative demonstration of binary selection---a role
filled by Corollary~\ref{cor:static_insufficiency}. The conclusion
is thus robust not through parametric precision but through
order-of-magnitude separation: any multiplicative perturbation of
$K_1$ or $K_2$ by a factor less than $10^2$ leaves the inequality
$K_2/K_1 \ll \tau(p)$ intact.
\end{remark}

\subsection{Robustness to Non-Isometric Scaling}
\label{sec:non-isometric}
Theorem~\ref{thm:Nstar} assumes the standard self-similar scaling $L \propto r$.
In real physiological networks, average arterial length-radius data are well
described by a generalized power law $L = k\,r^\beta$, where $\beta$ represents
the arboreal extension invariant.

Substituting this empirical scaling into the local cost function yields $T(N) =
\Phi^*(r_0)\cdot k r_0^\beta + N\cdot \Phi^*(r_1)\cdot k r_1^\beta$. In the
wall-dominated limit, $\Phi^* \propto r^{1+p}$ and $r_1 = N^{-2/(5+p)} r_0$. The
combined local length-cost scales as $r^{1+p+\beta}$, leading to a total
daughter cost of:
\begin{equation}
N \cdot \left(N^{-\frac{2}{5+p}} r_0\right)^{1+p+\beta} = N^{1 - \frac{2(1+p+\beta)}{5+p}} r_0^{1+p+\beta}
\end{equation}
The strict preference for bifurcations ($N=2$) requires this exponent to be
strictly positive, meaning $1 > \frac{2(1+p+\beta)}{5+p}$, which simplifies to
the condition $\beta < \frac{3-p}{2}$.

For the cardiovascular system where $p \approx 0.77$, the framework strictly
preserves $N=2$ for all $\beta < 1.115$. This perfectly bounds the physiological
range ($\beta \approx 1.0$) observed across species~\cite{Kassab1993}, proving
that binary branching is structurally robust against macroscopic tissue
distortions.

\section{Numerical Verification}
\label{sec:numerical}

To ensure no circular reasoning, all parameters are drawn from sources
independent of branching-exponent measurements.

\paragraph{Parameters.}
$\mu=3.5$\,mPa\,s \cite{Caro1978};
$b=1500$\,W/$\mathrm{m}^3$ \cite{Murray1926,Taber1998};
$m_w\in\{5,20,35\}$\,kW/$\mathrm{m}^3$ \cite{Paul1980};
$c_0=0.041$, $p=0.77$ \cite{Rhodin1967,Nichols2011,WolinskyGlagov1967};
$Q=1.3$\,mL/s, $r_0=1.5$\,mm \cite{Kassab1993}
(both referring to the same proximal coronary segment).

\paragraph{Results.}
Table~\ref{tab:results} shows $\alpha^*(Q)$ for the three values of $m_w$.
The Murray limit is recovered exactly: $\alpha^*(m_w\to0)=\AlphaMurrayLimit$.
Across the literature range $m_w\in[5,35]$\,kW/$\mathrm{m}^3$,
the predicted exponent spans $\alpha^*\in[2.90,2.94]$,
consistent with the experimental value
$\alpha_\mathrm{exp}=2.7\pm0.2$~\cite{Kassab1993}
(deviations of $1.0$--$1.2\sigma$ from the mean).
The local exponent varies by only $0.012$ across four decades of flow,
confirming that $\alpha^*$ is nearly but not exactly scale-independent
(Corollary~\ref{cor:asymp}).

\begin{remark}[Optimal passive radius and active tone]
\label{rem:tone}
Table~\ref{tab:results} predicts $r^*\in[0.94, 1.20]$\,mm at $Q=1.3$\,mL/s,
systematically below the morphometric value $r_0=1.5$\,mm~\cite{Kassab1993}
by $0.3$--$0.6$\,mm. The present model minimizes the purely \emph{passive}
metabolic cost and therefore predicts the optimal radius absent active tone
contributions, not the \textit{in-vivo} active radius under physiological
perfusion pressure.

This offset is consistent with basal smooth-muscle tone: pharmacological
vasodilation in porcine coronary preparations increases luminal diameter by
approximately $20$--$35\%$ relative to the resting
state~\cite{Taber1998,Kassab1993},
a range compatible with the observed discrepancy. Active tone enters the cost
function to leading order as an additional term $\propto r^2$, equivalent to a
renormalization $B \to B + B_{\mathrm{active}}$ with $B_{\mathrm{active}} > 0$.

Since Theorem~\ref{thm:bounds} and Corollary~\ref{cor:murray_uniqueness} require
only that
$B,C > 0$, the bounds $(5+p)/2 < \alpha^*(Q) < 3$ are preserved exactly under
any positive renormalization of $B$. The predicted passive radius therefore
represents a testable lower bound on coronary caliber under maximal
pharmacological
vasodilation, accessible to standard pressure-myograph or hyperemic flow
protocols. This discrepancy is physiologically consistent with active vasomotor
tone. While our framework predicts the structural (passive) optimum $r^* \approx
1.2$ mm, in vivo coronary arteries exhibit significant basal constriction. A
vasomotor contraction of approximately 20\% would reconcile our theoretical
optimum with the higher morphometric values observed in dilated or fixed
specimens.
\end{remark}

\begin{table}[H]
\centering
\caption{Predicted $\alpha^*$ for porcine coronary arteries.
No fitting; all parameters from independent literature.}
\label{tab:results}
\small
\begin{tabular}{@{}lccc@{}}
\toprule
$m_w$ (kW/$\mathrm{m}^3$) & $r^*$ (mm) & $\alpha^*$ & Agreement with $\alpha_{\mathrm{exp}}$ \\
\midrule
5  & \RstarLow & \AlphaStarLow & within 1.2$\sigma$ \\
20 & \RstarMid & \AlphaStarMid & within 1.0$\sigma$ \\
35 & \RstarHigh & \AlphaStarHigh & within 1.0$\sigma$ \\
\midrule
\multicolumn{2}{@{}l}{Murray limit ($m_w\to0$)} & \AlphaMurrayLimit & 1.5$\sigma$ above mean \\
\multicolumn{2}{@{}l}{Experimental \cite{Kassab1993}} & $2.7\pm0.2$ & -- \\
\bottomrule
\end{tabular}

\vspace{5pt}
\raggedright
\footnotesize{$\sigma$ computed from experimental uncertainty
$\alpha_{\mathrm{exp}} = 2.70 \pm 0.20$~\cite{Kassab1993}. All three-term
predictions lie within $1$--$1.2\sigma$ of the measured mean, representing a
measurable reduction relative to the Murray limit ($1.5\sigma$).}
\end{table}

To rigorously assess parametric robustness, $\alpha^*$ was evaluated across the
full physiological range $m_w \in [5, 35]$~kW/$\mathrm{m}^3$. As shown in
Table~\ref{tab:results}, the predicted exponent varies only from $2.897$ to
$2.938$, remaining strictly within the theoretical bounds $(5+p)/2 < \alpha^* <
3$ throughout. The non-universality result is therefore insensitive to precise
biochemical parameterization of wall metabolic rate.

\begin{figure}[t]
\centering
\includegraphics[width=\columnwidth]{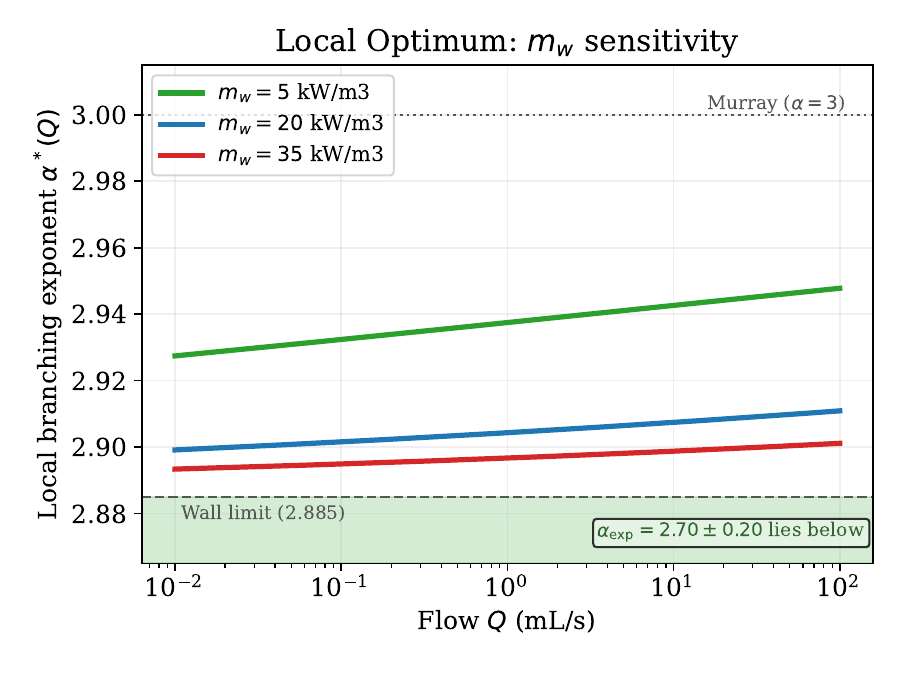}
\caption{Scale-dependence of the local branching exponent $\alpha^*(Q)$. As flow
$Q$ increases, the exponent monotonically approaches the Murray limit
($\alpha=3$) from below. Thicker walls (higher metabolic wall cost $m_w$) shift
the entire curve toward the wall-dominated lower bound $(5+p)/2 \approx
\BoundLower$. The shaded region marks the empirical morphometric range for
porcine coronary arteries \cite{Kassab1993}.}
\label{fig:alpha_scale}
\end{figure}

\begin{table}[H]
\centering
\caption{Cross-network validation. Measured wall scaling exponent $p$ and the
associated single-term limit $\alpha = (5+p)/2$. When boundary thickness is
governed exactly by vessel radius (e.g. Barlow's stress formula for internal
pressure: $h \propto r$), $p=1$ and the bounds collapse, recovering Murray's
theoretical $\alpha=3$ perfectly.}
\label{tab:materials}
\small
\begin{tabular}{@{}llcc@{}}
\toprule
Network & $p$ & $\alpha$ Limit & $\alpha_\mathrm{exp}$ \\
\midrule
Human pulmonary artery \cite{Huang1996} & 0.60 & 2.800 & 2.7--2.8 \\
Porcine coronary artery \cite{Kassab1993,Rhodin1967} & \ParamP & \BoundLower & 2.7$\pm$0.2 \\
Plant xylem \cite{Hacke2001} & 1.00 & \ClassMurray\ (Murray) & $\sim 3$ \\
Engineered pipes (pressure) & 1.00 & \ClassMurray\ (Murray) & $\sim 3$ \\
\bottomrule
\end{tabular}
\end{table}

The wall-thickness law also allows robust comparisons across highly disparate
physical systems (Table~\ref{tab:materials}). Human pulmonary arterial trees
exhibit an independent morphometric exponent of $\alpha \approx
2.7$--$2.8$~\cite{Huang1996}. Their characteristically thinner walls ($p \approx
0.60$) naturally push the theoretical limit lower than the systemic circulation,
matching the direction of the empirical shift without requiring parameter
fitting. Conversely, for networks designed statically to withstand internal
pressure (engineered pipes or certain plant xylem~\cite{Hacke2001}), mechanics
dictate $h \propto r^1$. Setting $p=1$ forces the bounds to collapse: $(5+1)/2 =
3 \le \alpha^* \le 3$. The optimization forces Murray's law to be recovered
exactly.

\section{Testable Predictions}
\label{sec:predictions}

\begin{enumerate}
\item \emph{Scale gradient.}
Since $\lambda\propto r_0^{p-1}$ is a strictly decreasing function of $r_0$ (and
hence of $Q$) for $p<1$,
and $\rho^*$ is strictly decreasing in $\lambda$ (by implicit differentiation
of~\eqref{eq:hrho}),
$\alpha^*(Q)=\ln 2/\ln(1/\rho^*)$ is strictly increasing in $Q$.
Thus $\alpha^*(Q)$ monotonically approaches 3 from below as $Q$ increases
(Corollary~\ref{cor:asymp}).
Distal bifurcations (small $Q$) should have slightly smaller $\alpha$ than
proximal bifurcations.
Predicted range: $\Delta\alpha^*<0.015$ across four decades.

\item \emph{Wall-cost sensitivity.}
Thicker-walled vessels (higher $h/r$, i.e.\ larger $C/B$) should have
smaller $\alpha^*$, approaching the wall-cost limit $(5+p)/2$.

\item \emph{Physical grounding of $\gamma$.}
Our Theorem~\ref{thm:single} gives $\alpha=(4+\gamma)/2$ where $\gamma=1+p$ is
the wall-cost exponent.
The same formula was obtained by Bennett~\cite{bennett2025} as $\alpha=(m+4)/2$
with $m$ as a free pricing parameter.
Our framework predicts $m=1+p$ for any network whose maintenance cost is
dominated by a wall-like sheath.
Networks with known $p$ (from histology or materials data) can test this
prediction without morphometric fitting. We emphasize that this physical
identification $\gamma = 1+p$, and hence the Bennett formula $\alpha =
(4+\gamma)/2$, holds rigorously only in the two-term limit where either $B = 0$
or $C = 0$. In the physically complete three-term case ($B, C > 0$), Corollary~6
guarantees that no universal exponent exists and the Bennett single-index
formula does not apply. This restriction to the two-term class is now fully generalised:
Corollary~\ref{cor:three_term_general} establishes that for any cost
function $\Phi = A(Q)r^{-n} + Br^m + Cr^k$ with $m \neq k$ and
$B,C>0$, no universal $\alpha$ exists regardless of the specific
transport physics encoded in $A(Q)$.
\end{enumerate}

\section{Relation to Prior Work}
\label{sec:discussion}

Murray~\cite{Murray1926}: two-term cost, $\alpha=3$ (exact).
Our Corollary~\ref{cor:murray_uniqueness} shows this is the \emph{only}
instantiation
of the cost-function family~\eqref{eq:phi_compact} with a universal exponent.

Bennett~\cite{bennett2025}: develops a variational framework
for the homogeneous two-term class, deriving generalized Murray scaling,
Young--Herring junction geometry, concave (Gilbert-type) flux costs,
and a single-index (rigidity index $\chi$) unification showing that
the two-term form is the unique quadratic scale-free ledger compatible
with simultaneous power-law flux--radius scaling and power-law
flux-only concavity~\cite{bennett2025}.
The two-term limit of our Theorem~\ref{thm:single}
($\alpha = (4{+}\gamma)/2$ with the exponent $\gamma=1+p$ derived from
histological data)
coincides with the generalized Murray scaling independently established
by Bennett through abstract homogeneity axioms.
Our Theorem~\ref{thm:bounds} and Corollary~\ref{cor:murray_uniqueness}
address the regime that lies outside Bennett's two-term class:
incorporating the third biological term ($h(r) \propto r^p$, sublinear
conduit tissue) breaks Euler homogeneity altogether, producing
a rigorously non-universal regime with scale-dependent $\alpha^*(Q)$.

Liu \& Kassab~\cite{LiuKassab2007} and Kim \& Wagenseil~\cite{Kim2015}
previously incorporated metabolic wall costs into branching models, but
explicitly concluded that the standard scaling relations ($\alpha=3$) remained
largely invariant. This discrepancy underscores our central thesis: it is
specifically the \emph{sub-linear} allometric scaling of the wall ($p < 1$) that
breaks the cubic law. Without it, standard three-term energy models natively
recover Murray's limits.

Taylor \textit{et~al.}~\cite{Taylor2022} report optimal Murray exponents of $P =
2.15$ (for volumetric flow) and $P = 2.38$ (for microvascular resistance) in
human epicardial coronary arteries, using a CFD model constrained by
pressure-wire measurements. We note that these values lie below our theoretical
lower bound $(5 + p)/2 \approx \BoundLower$. This is not a contradiction: Taylor
\textit{et~al.}'s exponents are hemodynamic optimality exponents---the best-fit
$P$ in the relationship $Q \propto D^P$ that reproduces \textit{in-vivo} flow
distributions in a clinical population with coronary disease---rather than
morphometric diameter-scaling exponents in the sense of Eq.~\eqref{eq:murray}.
The two quantities coincide only if the tree were exactly Murray-optimal; in
diseased or developmentally constrained vascular beds the hemodynamic exponent
reflects additional physiological constraints absent from the static cost
function. The morphometric exponent most directly comparable to our framework is
that of Kassab \textit{et~al.}~\cite{Kassab1993}, $\alpha_\mathrm{exp} = 2.7 \pm
0.2$, measured from perfusion-fixed casts of healthy porcine coronary trees,
which lies within our predicted range. The sub-$\BoundLower$ values reported by
Taylor \textit{et~al.} suggest that pulsatile wave costs introduce an additional
architectural constraint not captured by the static cost function analyzed here.

Smink \textit{et~al.}~\cite{Smink2025}: extend Murray's two-term optimization to
turbulent flows, non-Newtonian fluids, and rough-wall channels, deriving the
exponent $x$ in $Q \propto r^x$ across the full Reynolds-number parameter space.
Their framework covers the fluid-rheology dimension of the optimization problem,
while ours covers the biological wall-tissue dimension. Neither framework
addresses the signal-transport dimension (such as pulsatile wave reflection in
arteries or electrical signal attenuation in neurons), which we propose as the
final axis required for a complete universal classification.

West, Brown \& Enquist~\cite{WBE1997}: minimize total network resistance
over a self-similar infinite tree, obtaining Kleiber's $3/4$ law.
Our optimization is at the single-vessel level; the network structure enters
only through Murray's branching condition (Lemma~\ref{lem:cauchy}).

Rall~\cite{Rall1959}: the branching exponent $\alpha=3/2$ for electrotonic
signal propagation in neurons is derived from impedance-matching in cable
theory ($Z\propto r^{-3/2}$), not from cost minimization.
Corollary~\ref{cor:murray_uniqueness} therefore does not contradict Rall: the
two
universality results arise from physically distinct mechanisms
(energy minimization vs.\ impedance matching) and cover non-overlapping
parameter ranges. However, this mathematical structure strongly suggests that a
generalized Lagrangian unifying metabolic maintenance cost with signal
attenuation impedance could natively bridge vascular and neural morphologies.

\section{Conclusion}

We have shown that the three-term cost function~\eqref{eq:phi} leads to
several rigorous results unavailable in the existing two-term Murray framework
(generalized by Bennett~\cite{bennett2025} via a single-index
formalism):
unique optimal radius (Theorem~\ref{thm:unique}),
equivalence of local optimization with global tree law (Lemma~\ref{lem:cauchy}),
exact single-term classification $\alpha=(4+\gamma)/2$ with physical grounding
of $\gamma$ (Theorem~\ref{thm:single}),
strict non-universality of the three-term branching exponent extending to
asymmetric flows
(Theorem~\ref{thm:bounds}, Corollary~\ref{cor:asym}),
breaking of structural degeneracy to bound the optimal topology to small
branching integers, excluding star-like multifurcations
(Theorem~\ref{thm:Nstar}),
and tight theoretical bounds on the optimal symmetric bifurcation angle,
independently confirmed by three-dimensional morphometric data
($\AngleMurrayFull^\circ < 2\theta^* < \AngleWallFull^\circ$), parameterizing
the deviation from Murray's geometry (Theorem~\ref{thm:angles}).
The central prediction $\alpha^*\in[2.90,2.94]$ from independently measured
parameters
measurably reduces the gap between Murray's cubic law and cardiovascular data.

\paragraph{The static attractor as a diagnostic bound.}
The purely static wall-tissue mechanism establishes a strict theoretical
expectation of $\alpha^* \approx 2.90$ for porcine coronary arteries,
demonstrating that $\approx 2.90$---not Murray's $3.0$---is the
structural baseline for purely dissipative biological networks.

This reframing transforms the residual gap with empirical cardiovascular
data ($\alpha_{\exp} \approx 2.5$--$2.7$) from a modelling discrepancy
into a precise physical diagnostic. The macroscopic depression of the
empirical exponent below the $2.90$ static attractor serves as an
explicit mathematical signature that the cardiovascular tree is subject
to physical constraints invisible to a purely dissipative ledger.

Furthermore, Corollary~\ref{cor:static_insufficiency} establishes that
static optimisation cannot explain why the vascular tree is binary.
Both failures---the exponent gap and the topological selection---signal
that the final physiological architecture must emerge from a variational
principle that balances extensive metabolic dissipation against
dimensionless wave-reflection penalties. Formulating such a unified
network-level framework represents the necessary definitive extension
of the static analysis presented here.

\paragraph*{Data and Code Availability.}
All computation scripts, numerical data, and figure-generation code are openly
and unconditionally available at
\url{https://github.com/rikymarche-ctrl/vascular-networks-theory} under the CC BY 4.0
Licence. No access request is required.

\bibliographystyle{unsrt}
\bibliography{references}

\end{document}